\algrenewcommand\algorithmicrequire{\textbf{Input:}}
\algrenewcommand\algorithmicensure{\textbf{Output:}}
\algnewcommand\Input{\item[\algorithmicinput]}%
\algnewcommand\Output{\item[\algorithmicoutput]}%
\newcommand{\argmin}[1]{\underset{#1}{\mathrm{argmin}}}
\newcommand{\vol}[1]{ \mathrm{vol}({#1}) }
\newcommand{\mi}{\imath}
\newcommand{\calS}{\mathcal{S}}
\newcommand{\sym}[1]{ {#1}_{\mathrm{sym}} }
\DeclareMathAlphabet{\mathcal}{OMS}{cmsy}{m}{n}
\begin{document}

\title{An iterative spectral algorithm for digraph clustering}

\shorttitle{An iterative spectral algorithm for digraph clustering} 
\shortauthorlist{J. Martin \emph{et al.}} 

\author{
\name{James Martin$^\dagger$, Tim Rogers and Luca Zanetti}
\address{Department of Mathematical Sciences, University of Bath, Bath, UK\email{$^\dagger$Corresponding author. Email: jlm80@bath.ac.uk}}
}

\maketitle

\begin{abstract}
{Graph clustering is a fundamental technique in data analysis with applications in many different fields. While there is a large body of work on clustering undirected graphs, the problem of clustering directed graphs is much less understood. The analysis is more complex in the directed graph case for two reasons: the clustering must preserve directional information in the relationships between clusters, and directed graphs have non-Hermitian adjacency matrices whose properties are less conducive to traditional spectral methods. Here we consider the problem of partitioning the vertex set of a directed graph into $k\ge 2$ clusters so that edges between different clusters tend to follow the same direction. We present an iterative algorithm based on spectral methods applied to new Hermitian representations of directed graphs. Our algorithm performs favourably against the state-of-the-art, both on synthetic and real-world data sets. Additionally, it is able to identify a ``meta-graph" of $k$ vertices that represents the higher-order relations between clusters in a directed graph. We showcase this capability on data sets pertaining food webs, biological neural networks, and the online card game Hearthstone.}
{graph clustering, directed graphs, spectral methods.}
\end{abstract}

\section{Introduction}
The goal of clustering is to partition a set of objects into subsets (clusters) so that objects in the same clusters are \emph{similar}, while objects in different clusters are somewhat dissimilar. When the data set is described by an undirected graph, clustering traditionally aims to find regions of the graph characterised by a high inner-density and that are connected by few inter-cluster edges~\cite{FortunatoSurvey}. 

Clustering in undirected graphs is very well understood, both from a theoretical and experimental perspective \cite{LOT,von2007tutorial,PSZ}. Many real-world data sets, however, are more faithfully represented by directed graphs. Unfortunately, clustering directed graphs is much less understood. Indeed, even the problem of formalising the notion of a cluster in a directed graph does not have a definitive answer.

In this work, we focus on the following clustering problem: we want to partition the vertex set of a digraph $G$ into $k\ge 2$ clusters $S_1,\dots,S_k$ of similar sizes so that, for any $i \ne j$, the majority of edges between $S_i$ and $S_j$ are oriented in the same direction, i.e., from $S_i$ to $S_j$ or vice versa. As we are interested in flow imbalance, we consider only oriented graphs, which are graphs with no more than one edge between any pair of vertices. Given such a clustering, we can define a small $k$-vertex digraph where a vertex $i$ is connected to a vertex $j$ whenever most edges between $S_i$ and $S_j$ in $G$ are oriented from the former to the latter. We call this graph the \emph{meta-graph} of $S_1,\dots,S_k$. Our aim becomes finding the meta-graph that best summarises $G$.

Notice that a subset in an undirected graph is typically considered a cluster if it is well connected inside and sparsely connected with respect to the rest of the graph. On the contrary, in the above formulation for digraphs, the assignment of a vertex to a cluster depends on the partitioning of all the other vertices; we are, therefore, trying to find higher-order patterns in digraphs. 

Consider, for example, the digraph in Fig.~\ref{Fig:toyexample}. The subsets of vertices shown in different colours do not exhibit particular discrepancies in their inner and outer densities. Therefore, typical undirected clustering techniques are not able to uncover specific patterns. We can observe, however, that edges between each pair of subsets have a clear dominant direction from one subset to the other. Moreover, by collapsing each subset into one vertex, we can form a meta-graph that corresponds to a directed cycle of three vertices. This small meta-graph describes the particular cluster-structure of this digraph.

\begin{figure}[!ht]
    \centering
    \includegraphics[width=0.4\textwidth]{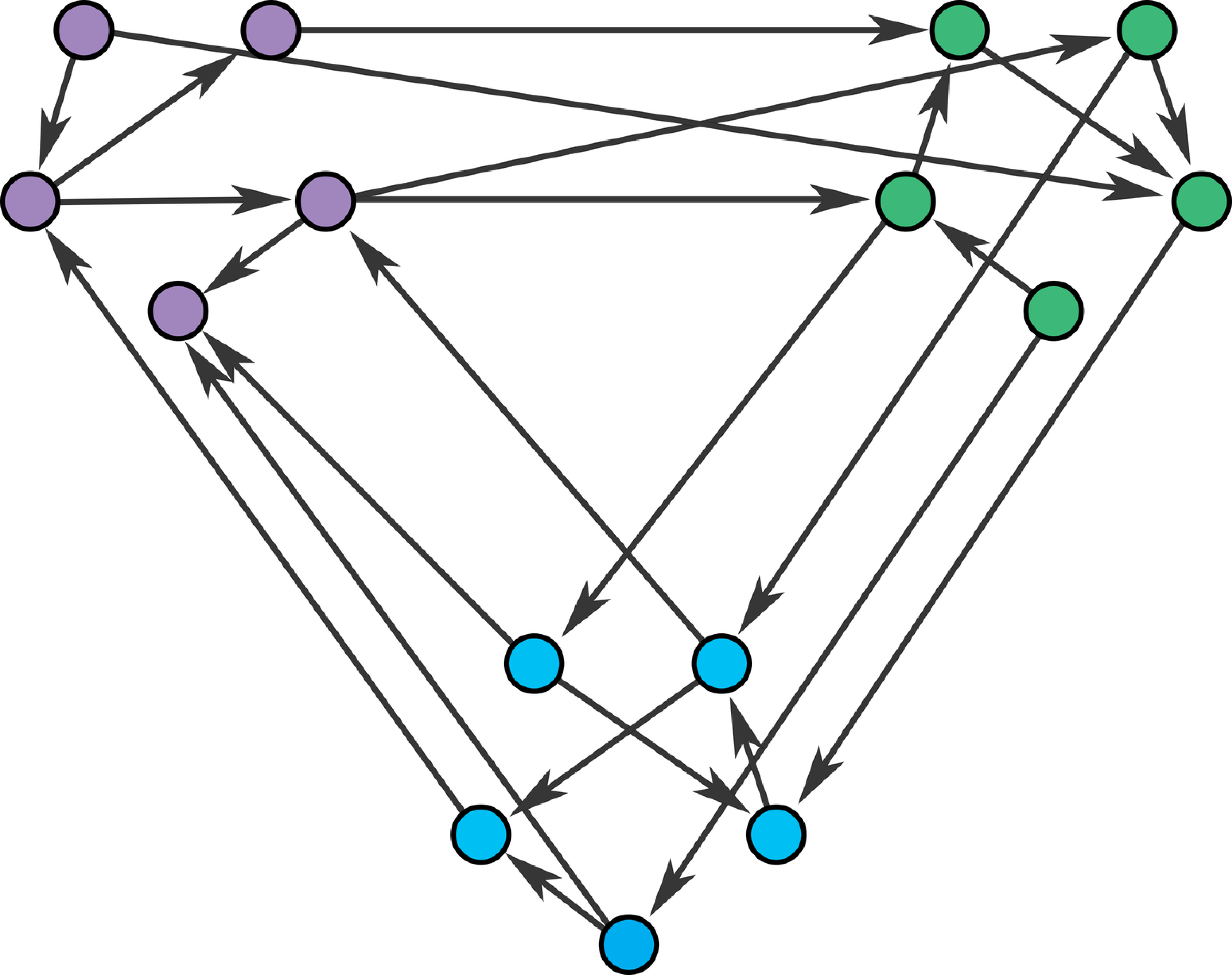}
    \caption{Example of a digraph with a cluster-structure that depends on the direction of the edges, rather than the density of the clusters.}
    \label{Fig:toyexample}
\end{figure}

To showcase the potential usefulness of this clustering task, consider a directed graph representing a food web, where vertices of the graph correspond to species of organisms present in a certain ecosystem, and there exists an edge from a vertex $u$ to a vertex $v$ if $u$ gets consumed by $v$. Graph-theoretic approaches have been applied to determine trophic levels \cite{mackay2020directed}, i.e., the position of the organisms in the food chain, essentially placing ecological communities on a linear ordering. In other words, these approaches compute a clustering with a meta-graph that corresponds to a directed path. Ecological communities, however, often have more complicated interactions that cannot be faithfully captured  by a linear ordering. Our clustering task aims to find the most relevant ecological communities together with their interactions, without imposing a path structure on the corresponding meta-graph. We will analyse a data set of this kind in Section~\ref{sec:real_exp}.

\subsection{Our approach}

We present a new algorithm for this clustering problem which is based on an iterative application of a particular spectral clustering procedure. Spectral clustering is a well known technique for clustering undirected graphs~\cite{ng2001spectral, ShiMalik}. It exploits the most significant eigenvectors of a matrix representation of a graph (such as the adjacency matrix or the Laplacian) to embed its vertices into a low dimensional Euclidean space. This embedding essentially transforms the original combinatorial problem into a geometric one; classical geometric clustering techniques such as $k$-means can then be applied on the embedded points to recover a partition of the vertex set into clusters. While spectral clustering algorithms are known to work well in the context of undirected graphs, their application to directed graphs is much less explored. One of the reasons for this scarcity of work is that traditional matrix representation of digraphs (such as the adjacency matrix) are not Hermitian and therefore do not admit a basis of orthonormal eigenvectors. Consequently, Cucuringu et al. \cite{cucuringu2020hermitian} have recently investigated a simple Hermitian representation of digraphs: given the traditional (asymmetric) adjacency matrix $W$ of a digraph, they consider the matrix $H = \mi W - \mi W^{\top}$, where $\mi$ is the imaginary unit. They show that spectral clustering with such a matrix representation produces clusters where the directions of the edges between clusters are characterised by strong imbalances. They provide some theoretical justification of the performance of their algorithm on the Directed Stochastic Block Model (DSBM), a random digraph model with a ground truth clustering. 

By investigating the quadratic form of the Hermitian matrix used by Cucuringu et al. (i.e., $x^* H x$), however, it becomes evident that their algorithm presents a major drawback: it favours cluster-structures whose meta-graph corresponds to a directed $4$-cycle. This limitation, which we explain in Section~\ref{sec:prelim}, became the motivation for  our work: can we find a matrix representation for digraphs that does not favour any particular fixed meta-graph, but rather, for each digraph, encodes the ``best'' meta-graph?

Given a clustering $\mathcal{S} = \{S_0,\dots,S_{k-1}\}$ for a digraph $G$, we construct a Hermitian matrix $H_{\mathcal{S}}$ whose quadratic form $x^* H_{\mathcal{S}} x$ penalise edges with a direction that does not follow the direction prescribed by the meta-graph associated with  $\mathcal{S}$. The entries of $H_{\mathcal{S}}$ are carefully defined using complex roots of unity in order to be able to penalise only the right edges. This representation crucially depends on both the graph $G$ and the clustering $\mathcal{S}$.  Thanks to this construction, we design an iterative algorithm that explores the space of possible meta-graphs by repeatedly applying spectral clustering to matrix representations corresponding to potentially different meta-graphs. The algorithm outputs the best clustering computed during these iterations according to some purposely designed measure.

We examine the performance of our algorithm on the DSBM and show it outperforms the algorithm by Cucuringu et al. for an extensive range of parameters. In particular, our algorithm seems especially more accurate when the meta-graph of the optimal clustering is sparse.
We also apply our algorithm to a diverse range of real-world data sets. For example, we analyse a data set related to the popular online card game Hearthstone and uncover interesting properties of its game design.

\subsection{Further related work}

There is a wealth of literature dedicated to graph partitioning and spectral clustering on undirected graphs~(see, e.g., \cite{LOT, ng2001spectral, PSZ, von2007tutorial}).
As observed by Malliaros and Vazirgiannis \cite{malliaros2013clustering}, when the graph is directed, the notion of a cluster often falls in one of two categories: \emph{density-based} or \emph{pattern-based}. Undirected clustering techniques typically follow the density-based notion, and many of these techniques can be extended to directed graphs \cite{chung05,gleich2006hierarchical, palla2007directed}.

We are instead interested in the pattern-based notion of clusters, where vertices are partitioned based on common patterns in the direction of the edges, rather than according to the edge density. 
Recently, several spectral algorithms for pattern-based clustering of digraphs have been proposed. These algorithms differ in the Hermitian representations they exploit. For example, 
Rohe et al. \cite{coclustering} consider symmetrisations $WW^\top$ and $W^\top W$ of the asymmetric adjacency matrix $W$. Satuluri and Parthasarathy \cite{bib-sym} consider instead the symmetrisation $WW^\top + W^\top W$.  The aforementioned Cucuringu et al. \cite{cucuringu2020hermitian} use the skew-symmetric Hermitian representation $\mi (W - W^\top)$ (and several normalisations of this matrix) and generally obtain better results than \cite{coclustering,bib-sym}. Laenen and Sun \cite{LaenenSun} use a variation of the representation by Cucuringu et al. where $\mi$ is replaced by the $k$-th root of unity $\omega_k$, with $k$ equal to the number of clusters. Applying techniques taken from \cite{PSZ, SZ19}, they obtain theoretical results for recovering clusters whose meta-graph is a directed $k$-cycle.

Finally, we mention that complex-valued Hermitian representations of digraphs have recently been applied to Graph Neural Networks as well~\cite{he2022msgnn,zhang2021magnet}.

\section{Notation and preliminaries}
\label{sec:prelim}
A (weighted) directed graph (or digraph) $G=(V,E,w)$ is a triple where $V$ is the set of vertices, $E \subseteq V \times V$ is the set of directed edges (which we denote with ordered pairs of vertices), and $w \colon V \times V \to \mathbb{R}_{\ge 0}$ is a nonnegative weight function such that $w(u,v)>0$ if and only if $(u,v) \in E$. An unweighted digraph $G=(V,E)$ is simply a digraph where the edges have unit weight. 
For simplicity, we will consider only simple graphs, which are graphs without cycles of length one or two.

Given two subsets of vertices $A,B \subset V$, not necessarily distinct, we denote with $E(A,B)$ the set of edges from $A$ to $B$, i.e., $E(A,B) = \{(u,v) \in E \colon u \in A, v \in B\}$.

We say $S \subseteq V$ is \emph{weakly connected} if, for any $x,y \in S$, there exists a sequence of vertices $x = u_0, u_1, \dots, u_l = y$, all  belonging to $S$, such that, for any $i=0,\dots,\ell-1$, $(u_i, u_{i+1}) \in E$ or $(u_{i+1}, u_i) \in E$. 

A \emph{clustering} of $G$ is a partition of the vertex set $V$ into $k \ge 2$ nonempty subsets. Given a clustering $\mathcal{S} = \{S_0,\dots,S_{k-1}\}$, we define the \emph{meta-graph} of $\mathcal{S}$ as the digraph $G^{\mathcal{S}} = (V^{\mathcal{S}},E^{\mathcal{S}})$ where $V^{\mathcal{S}} = \{0,\dots,k-1\}$ and $(i,j) \in E^{\mathcal{S}}$ if and only if most of the weight of edges in $G$ between $S_i$ and $S_j$ is oriented from $S_i$ to $S_j$; more precisely $(i,j) \in E^{\mathcal{S}}$ if and only if $\sum_{u \in S_i, v \in S_j} w(u,v) > \sum_{u \in S_j, v \in S_i} w(u,v)$.

The (asymmetric) adjacency matrix of $G$ is the matrix $W \in \mathbb{R}^{V \times V}$ such that $W_{u,v} = w(u,v)$ for any $u,v \in V$. We define the degree of a vertex $u \in V$ as the sum of the weights of the edges incident to $u$, i.e., $d(u) = \sum_{v \in V} (w(u,v) + w(v,u))$. The volume of a subset of vertices $S \subseteq V$ is the sum of the degrees of the vertices in $S$, i.e., $\vol{S} = \sum_{u \in S} d(u)$.  The degree matrix $D \in \mathbb{R}^{V \times V}$ is the diagonal matrix whose nonzero entries are equal to the degrees of the vertices, i.e., $D_{u,u} = d(u)$.

We call $M$ a Hermitian representation of $G=(V,E,w)$ if $M \in \mathbb{C}^{V \times V}$ such that $M = M^{*}$ and, for any $u,v \in V$, $M_{u,v} = (w(u,v) + w(v,u)) c_{u,v}$ where $c_{u,v} = \overline{c_{v,u}} \in \mathbb{C}$ and $|c_{u,v}|=1$. Given a Hermitian representation $M$ together with a degree matrix $D$, we define the Laplacian matrix as $L(M) = D-M$

The following lemma, which follows by simple computations, gives a useful formula for the quadratic form of the Laplacian $L(M)$. 

\begin{lemma}
\label{lem:quadform}
    Let $M$ be a Hermitian representation of $G=(V,E,w)$. Then, for any $x \in \mathbb{C}^V$, it holds that
    \[
        x^{*}L(M)x = \sum_{(u,v) \in E} w(u,v)\left|x_u - \frac{M_{u,v}}{w(u,v)} x_v\right|^2.
    \]
\end{lemma}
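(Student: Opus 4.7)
The plan is to just expand $x^{*}L(M)x = x^{*}Dx - x^{*}Mx$ and regroup the terms by unordered pairs of vertices. Writing $d(u)=\sum_{v\in V}(w(u,v)+w(v,u))$ and unpacking the diagonal term,
\[
x^{*}Dx = \sum_{u\in V} d(u)|x_u|^{2} = \sum_{\{u,v\}}\bigl(w(u,v)+w(v,u)\bigr)\bigl(|x_u|^{2}+|x_v|^{2}\bigr),
\]
where the final sum is over unordered pairs $\{u,v\}$ with $u\neq v$. Similarly, using $M_{u,v}=(w(u,v)+w(v,u))c_{u,v}$ together with $c_{v,u}=\overline{c_{u,v}}$, the off-diagonal term pairs up as
\[
x^{*}Mx = \sum_{\{u,v\}}\bigl(w(u,v)+w(v,u)\bigr)\bigl(\overline{x_u}c_{u,v}x_v + \overline{x_v}c_{v,u}x_u\bigr) = \sum_{\{u,v\}} 2\bigl(w(u,v)+w(v,u)\bigr)\,\mathrm{Re}\bigl(\overline{x_u}c_{u,v}x_v\bigr).
\]

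Next I would recognise the combination as a perfect square. Since $|c_{u,v}|=1$,
\[
|x_u - c_{u,v}x_v|^{2} = |x_u|^{2} + |x_v|^{2} - 2\,\mathrm{Re}\bigl(\overline{x_u}c_{u,v}x_v\bigr),
\]
so subtracting the two displays above gives
\[
x^{*}L(M)x = \sum_{\{u,v\}}\bigl(w(u,v)+w(v,u)\bigr)\bigl|x_u - c_{u,v}x_v\bigr|^{2}.
\]

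Finally, I invoke the standing assumption that $G$ is simple (no $2$-cycles), so every unordered pair $\{u,v\}$ contributing to the sum has at most one of $(u,v),(v,u)$ in $E$. If $(u,v)\in E$ then $w(v,u)=0$, hence $w(u,v)+w(v,u)=w(u,v)$ and $M_{u,v}/w(u,v)=c_{u,v}$; this converts the sum over unordered pairs into a sum over directed edges, yielding the claimed identity. The only mild subtlety is keeping track of which direction the nonzero weight sits on in order to identify $c_{u,v}$ with $M_{u,v}/w(u,v)$; apart from that the proof is a direct bookkeeping exercise.
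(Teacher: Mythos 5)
Your proof is correct and follows essentially the same route as the paper's: expand $x^{*}L(M)x$ into the degree and off-diagonal pieces, regroup, and complete the square. The one small stylistic difference is that you regroup over unordered pairs and keep the unit-modulus factor $c_{u,v}$ abstract until the last step, which makes it explicit exactly where the simple-graph (no two-cycle) hypothesis is invoked to identify $c_{u,v}$ with $M_{u,v}/w(u,v)$ and reduce $w(u,v)+w(v,u)$ to a single weight; the paper instead splits directly into the two directed cases $(u,v)\in E$ and $(v,u)\in E$ and reindexes. Both are the same computation, and yours is a clean rendering of it.
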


\begin{proof}
Recall that $|M_{u,v}|=w(u,v)$ if $(u,v) \in E$, $|M_{u,v}|=w(v,u)$ if $(v,u) \in E$, and $|M_{u,v}|=0$ otherwise. Moreover, $D_{u,u} = \sum_{v \in V} |M_{u,v}|$. Then, we have that
\begin{align*}
    x^* L(M) x
    &=
    \sum_{u \in V} D_{u,u} \overline{x_u} x_u - \sum_{u,v \in V}  M_{u,v} \overline{x_u} x_v
    \\ &=
    \sum_{u,v \in V} |M_{u,v}| \overline{x_u} x_u - \sum_{u,v \in V} M_{u,v} \overline{x_u} x_v
    \\ &=
    \sum_{(u,v) \in E} |M_{u,v}| \overline{x_u} x_u + \sum_{(v,u) \in E} |M_{u,v}| \overline{x_u} x_u
    - \sum_{(u,v) \in E} M_{u,v} \overline{x_u} x_v - \sum_{(v,u) \in E} M_{u,v} \overline{x_u} x_v
    \\ &=
    \sum_{(u,v) \in E} |M_{u,v}| \overline{x_u} x_u + \sum_{(u,v) \in E} |M_{v,u}| \overline{x_v} x_v 
    - \sum_{(u,v) \in E} M_{u,v} \overline{x_u} x_v - \sum_{(u,v) \in E} M_{v,u} \overline{x_v} x_u
    \\ &=
    \sum_{(u,v) \in E} w(u,v) \left( \overline{x_u} x_u + \overline{x_v} x_v 
    - \frac{M_{u,v}}{w(u,v)} \overline{x_u} x_v - \frac{\overline{M_{u,v}}}{w(u,v)} \overline{x_v} x_u \right)
    \\ &=
    \sum_{(u,v) \in E} w(u,v) \left|x_u - \frac{M_{u,v}}{w(u,v)}x_v \right|^2.
\end{align*}
\end{proof}

This simple lemma is crucial to understand the algorithm by Cucuringu et al. \cite{cucuringu2020hermitian}. In fact, their algorithm  uses the eigenvectors corresponding to the smallest $k$ eigenvalues of $L(M) = D-M$ where $M=\mi W -\mi W^{\top}$. By the Courant-Fischer theorem \cite[Theorem 4.2.11]{HJ} and Lemma~\ref{lem:quadform}, the eigenvector corresponding to the smallest eigenvalue of $L(M)$ can be seen as the solution to the following minimisation problem:
\begin{equation}
\label{eq:quadi}
    \min_{x \in \mathbb{C}^V \setminus \{0\}} \sum_{(u,v)\in E} \frac{w(u,v)\left|x_u - \mi x_v\right|^2}{x^{*}x}.
\end{equation}
Notice this quadratic form is  equal to zero if and only if $x_u = \mi x_v$ for all $(u,v) \in E$. 
For that to be the case, there must be a $c \in \mathbb{C}$ such that, for all $u \in V$ belonging to the same weakly connected component, $c \cdot x_u \in \{1, \mi, -1, -\mi\}$. After this rotation and scaling, the eigenvector $x$ must map the vertices to powers of the fourth root of unity and edges must be oriented from vertices corresponding to a power of $\mi$ to the next.
Therefore, $L(M)$ has a zero eigenvalue if and only if we can embed $G$ into a directed $4$-cycle respecting all the edge directions. This tells us that by solving the minimisation problem (\ref{eq:quadi}), we are trying to obtain a clustering whose meta-graph is cyclic. This does not imply a clustering obtained by using the minimiser of (\ref{eq:quadi}) will necessarily have a cyclic meta-graph, but rather that the bottom eigenvectors of $L(M)$ implicitly favour such a cyclic structure.

\section{Our algorithm}
In this section we present our algorithm for clustering directed graphs. The starting point for designing our algorithm is to construct a matrix representation $M^{\mathcal{S}}$ of a digraph $G$ that, given a clustering $\mathcal{S}$, penalises edges whose direction does not follow the one prescribed by the meta-graph of $\mathcal{S}$. Each iteration will try to refine the clustering and its corresponding meta-graph, such that the matrix $M^{\mathcal{S}}$ is dependent on the clustering from the previous step. 

To construct $M^{\mathcal{S}}$, let us assign to each cluster a power of the $k$-th root of unity, e.g., the cluster $S_i$ is assigned to $\omega_k^i = \exp\left(\frac{2i \pi \mi}{k}\right)$. We can represent this assignment as a vector $x \in \mathbb{C}^V$ such that $x_u = \omega_k^j$ whenever $u \in S_j$ for any $j \in \{0,\dots,k-1\}$.  Let $u \in S_i$ and $v \in S_j$ and suppose $(u,v) \in E$. Consider the term in the quadratic form $x^{*}L(M^{\mathcal{S}})x$ associated with $(u,v)$, i.e., $w(u,v) \cdot \left|x_u - \frac{M^{\mathcal{S}}_{u,v}}{w(u,v)} x_v\right|^2$. We would like this term to be equal to zero if there exists an edge oriented from $i$ to $j$ in the meta-graph $G^{\mathcal{S}}$, otherwise we would like this term to be equal to $w(u,v)$. Indeed, by the definition of $x$, it holds that
\begin{align}
\label{eq:pen=1}
\left|x_u - \frac{M^{\mathcal{S}}_{u,v}}{w(u,v)} x_v\right| = 1 &\iff M^{\mathcal{S}}_{u,v} = w(u,v) \cdot \omega_k^{i-j} \cdot e^{\pm \pi \mi / 3}, \\
\label{eq:pen=0}
\left|x_u - \frac{M^{\mathcal{S}}_{u,v}}{w(u,v)} x_v\right| = 0 &\iff M^{\mathcal{S}}_{u,v} = w(u,v) \cdot \omega_k^{i-j}.
\end{align}

%
%
To understand why \eqref{eq:pen=1} holds, we first prove the forward direction.
Let $\frac{M^{\mathcal{S}}_{u,v}}{w(u,v)} = e^{c \mi}$, where $u \in S_i$, $v \in S_j$ and $c \in \mathbb{R}$. Then
\begin{align*}
    \left|x_u - \frac{M^{\mathcal{S}}_{u,v}}{w(u,v)} x_v\right| = 1 
    \iff 
    \left|\omega_k^i - e^{c \mi} \omega_k^j \right| &= 1
    \iff 
    \left| 1 - e^{\left(c + \frac{2\pi(j-i)}{k} \right) \mi} \right| = 1.
\end{align*}
Let $z \in \mathbb{R}$. Using Euler's formula $e^{z \mi} = \cos(z) + \mi \sin(z)$, we have that
\[
\left| 1 - e^{z \mi} \right| = 1 \iff \cos(z) = \frac{1}{2} \iff z = \pm \frac{\pi}{3} + 2n\pi,
\]
for any $n \in \mathbb{Z}$. 
Therefore,
\[
\frac{2\pi(j-i)}{k}+c = \pm \frac{\pi}{3} + 2n\pi
\]
and we can define $M^{\mathcal{S}}_{u,v}$ as
\[
M^{\mathcal{S}}_{u,v} = w(u,v) \cdot \omega_k^{i-j} \cdot e^{\pm \pi \mi / 3},
\]
where the plus or minus sign in the exponent can be chosen arbitrarily.

To prove the reverse direction, we have
\[
M^{\mathcal{S}}_{u,v} = w(u,v) \cdot \omega_k^{i-j} \cdot e^{\pm \pi \mi / 3}
 \iff
\frac{M^{\mathcal{S}}_{u,v}}{w(u,v)} \omega_k^j =  \omega_k^i \cdot e^{\pm \pi \mi / 3}.
\]
Hence
\[
\left| \omega_k^i - \frac{M^{\mathcal{S}}_{u,v}}{w(u,v)} \omega_k^j \right| =  \left| \omega_k^i \right| \left|1 -  e^{\pm \pi \mi / 3} \right| = 1.
\]

The proof of \eqref{eq:pen=0} follows using similar arguments.


With these considerations in mind, we define the Hermitian matrix $M^{\mathcal{S}}$ as follows. Let $u \in S_i$ and $v \in S_j$. Then

\begin{equation}
\label{eq:Mdef}
    M_{u,v}^{\mathcal{S}} =
    \begin{cases}
    \begin{aligned}
    & w(u,v) \cdot \omega_k^{i-j} && \text{if } (u,v) \in E \text{ and } (i,j) \in E^{\mathcal{S}} \\
    & w(v,u) \cdot \omega_k^{j-i} && \text{if } (v,u) \in E \text{ and } (i,j) \in E^{\mathcal{S}} \\
    & w(u,v) \cdot \omega_k^{i-j} \mathrm{e}^{\pi \mi / 3} && \text{if } (u,v) \in E, (i,j) \not \in E^{\mathcal{S}} \text{ and } i \neq j \\
    & w(v,u) \cdot \omega_k^{j-i} \mathrm{e}^{-\pi \mi / 3}&& \text{if } (v,u) \in E,  (i,j) \not \in E^{\mathcal{S}} \text{ and } i \neq j \\
    &w(u,v) && \text{if } (u,v) \in E \text{ and } i = j \\
    &w(v,u) && \text{if } (v,u) \in E \text{ and } i = j \\
    &0 && \text{if } (u,v) \not\in E \text{ and } (v,u) \not\in E.
    \end{aligned}
    \end{cases}
\end{equation}

We remark that \eqref{eq:pen=1} allows for two possible alternative choices for the third and fourth line in \eqref{eq:Mdef}. We have arbitrarily chosen one of them.
Also notice that the penultimate and antepenultimate conditions cover the situation in which $i=j$, meaning we do not assign penalties on edges inside clusters. This can be changed by instead using the matrix
\begin{equation}
    \label{eq:Mdef_pen}
    {M_P^{\mathcal{S}}}_{u,v} =
    \begin{cases}
    \begin{aligned}
    & w(u,v) \cdot \omega_k^{i-j} && \text{if } (u,v) \in E \text{ and } (i,j) \in E^{\mathcal{S}} \\
    & w(v,u) \cdot \omega_k^{j-i} && \text{if } (v,u) \in E \text{ and } (i,j) \in E^{\mathcal{S}} \\
    & w(u,v) \cdot \omega_k^{i-j} \mathrm{e}^{\pi \mi / 3} && \text{if } (u,v) \in E, (i,j) \not \in E^{\mathcal{S}} \text{ and } i \neq j \\
    & w(v,u) \cdot \omega_k^{j-i} \mathrm{e}^{-\pi \mi / 3}&& \text{if } (v,u) \in E,  (i,j) \not \in E^{\mathcal{S}} \text{ and } i \neq j \\
    & w(u,v) \cdot e^{\frac{\mi \pi}{3}} && \text{if } (u,v) \in E \text{ and } i = j \\
    & w(u,v) \cdot e^{\frac{-\mi \pi}{3}} && \text{if } (v,u) \in E \text{ and } i = j \\
    &0 && \text{if } (u,v) \not\in E \text{ and } (v,u) \not\in E.
    \end{aligned}
    \end{cases}.
\end{equation}
In this case, each edge inside a cluster will be penalised, which can be beneficial in certain applications and we explore this variant in Section~\ref{sec:real_exp}. Unless otherwise stated, we will use the matrix representation defined in \eqref{eq:Mdef}. We remark that also the algorithm by Cucuringu et al. implicitly penalises edges inside clusters. This is because for an edge $(u,v) \in E$, where $u$ and $v$ are in the same cluster, we expect $x_u = x_v$, leading to a non-zero penalty of $|x_u - \mi x_v| = \sqrt{2}$.

Clearly, $M^\mathcal{S}$ is Hermitian by construction. Furthermore, the Laplacian $L(M^{\mathcal{S}})$, which is positive semidefinite, has a zero eigenvalue if and only if $\mathcal{S}$ is a \emph{perfect} clustering, in the sense that the directions of all edges between clusters respect the orientations given by the meta-graph $G^{\mathcal{S}}$. More precisely, we say $\mathcal{S}$ is perfect if, 
for any $u \in S_i, v \in S_j$ with $i \ne j$, it holds that $(u,v) \in E \implies (i,j) \in E^{\mathcal{S}}$.  
The next lemma formalises the fact above. 

\begin{lemma}
    \label{lem:zeroeig}
    Let $\mathcal{S}=\{S_0,\dots,S_{k-1}\}$ be a clustering and $M^{\mathcal{S}}$ defined as in (\ref{eq:Mdef}).  Then, if $\mathcal{S}$ is a perfect clustering, $L(M^{\mathcal{S}})$ has a zero eigenvalue. Furthermore, if both each cluster and the meta-graph of $\calS$ are weakly connected, then $L(M^{\mathcal{S}})$ having a zero eigenvalue implies $\calS$ is perfect.
\end{lemma}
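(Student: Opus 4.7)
The plan is to tackle the two implications in turn, the key device being a cluster-wise phase normalisation. For the forward direction, assume $\calS$ is perfect and exhibit an explicit zero eigenvector. Define $x \in \mathbb{C}^V$ by $x_u = \omega_k^j$ whenever $u \in S_j$, and appeal to Lemma~\ref{lem:quadform} to write $x^{*} L(M^{\calS}) x$ as a sum over $(u,v) \in E$ of non-negative contributions. An intra-cluster edge falls under line~5 of~\eqref{eq:Mdef}, giving multiplier $1$, so its term is $|\omega_k^i - \omega_k^i|^2 = 0$. For an inter-cluster edge with $u \in S_i$, $v \in S_j$, perfectness ensures $(i,j) \in E^{\calS}$, so line~1 applies with multiplier $\omega_k^{i-j}$ and the term is $|\omega_k^i - \omega_k^{i-j}\omega_k^j|^2 = 0$. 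Thus $x^{*} L(M^{\calS}) x = 0$; since the quadratic form of $L(M^{\calS})$ is non-negative by Lemma~\ref{lem:quadform} and $x \neq 0$, this witnesses a zero eigenvalue.

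For the converse, begin with an arbitrary nonzero $x \in \ker L(M^{\calS})$. Lemma~\ref{lem:quadform} and the non-negativity of each summand force the per-edge identity $x_u = (M^{\calS}_{u,v}/w(u,v)) x_v$ for every $(u,v) \in E$. I would change variables to $\beta_u = \omega_k^{-j} x_u$ for $u \in S_j$: intra-cluster edges give $x_u = x_v$ and hence $\beta_u = \beta_v$, so by weak connectivity of each $S_j$ the function $\beta$ is constant on each cluster, with common value $\beta_j$. Inter-cluster edges $(u,v) \in E$ with $u \in S_i$, $v \in S_j$, $i \neq j$, then yield, via lines~1 and~3 of~\eqref{eq:Mdef}, the two possibilities
\[
    \beta_i = \beta_j \text{ if } (i,j) \in E^{\calS}, \qquad \beta_i = e^{\pi\mi/3}\beta_j \text{ if } (i,j) \notin E^{\calS}.
\]
Suppose for contradiction that $\calS$ is not perfect: then some inter-cluster edge witnesses $\beta_i = e^{\pi\mi/3}\beta_j$ for a specific pair $i \neq j$. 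Using weak connectivity of the meta-graph, I connect $i$ to $j$ by a sequence of meta-edges (in either meta-direction); each meta-edge $(p,q) \in E^{\calS}$ is realised by at least one actual edge of $G$ from $S_p$ to $S_q$, since the majority-weight sum defining it is strictly positive, and that witnessing edge delivers the clean equality $\beta_p = \beta_q$. Chaining these along the path yields $\beta_i = \beta_j$; combined with the bad relation and the fact that $e^{\pi\mi/3} \neq 1$, this forces $\beta_j = 0$, and re-propagating through the weakly connected meta-graph drives every $\beta_r$ to $0$, making $x \equiv 0$ and contradicting the choice of $x$.

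The principal difficulty is keeping the case split in~\eqref{eq:Mdef} clean: between two clusters one may simultaneously see edges running both with and against the meta-graph direction, so one has to verify carefully that every ``bad'' inter-cluster edge really produces the rotation factor $e^{\pi\mi/3}$ incompatible with the equalities propagated by meta-aligned witnesses. The normalisation $\beta_u = \omega_k^{-j} x_u$ is what absorbs the cluster-dependent phase and reduces the entire argument to equality-versus-rotation propagation on the meta-graph, which is precisely what the two weak-connectivity hypotheses are designed to power.
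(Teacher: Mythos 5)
Your proof is correct, and the converse direction takes a genuinely different route from the paper's. The paper works with $x$ directly and splits the ``$\calS$ not perfect'' hypothesis into two explicit cases --- a bad edge running against an existing meta-edge $(i,j)\in E^{\calS}$, versus a bad edge between clusters with no meta-edge at all (the exactly-balanced situation) --- deriving a contradiction in each. Your change of variables $\beta_u = \omega_k^{-j} x_u$ absorbs the cluster-dependent phase, collapsing the entire per-edge constraint set to a single dichotomy (meta-edge $\Rightarrow$ equality, non-meta inter-cluster edge $\Rightarrow$ rotation by $e^{\pi\mi/3}$), so no case split is needed. This buys two things: the bookkeeping of powers of $\omega_k$ disappears, and, more importantly, you are more careful than the paper at the last step. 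The paper's case~(1) arrives at a relation of the form $x_v = e^{-\pi\mi/3} x_v$ and declares a contradiction, but that only forces $x$ to vanish on $S_i$ and $S_j$; to contradict $x \neq 0$ one must still propagate the zero through the weakly connected meta-graph (the paper invokes this connectivity only in case~(2)). Your argument does this propagation explicitly, driving every $\beta_r$ to $0$. In short: same forward direction, same structural ingredients in the converse, but a cleaner normalisation and a fully spelled-out contradiction.
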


\begin{proof}
We 
first show that, if $\calS$ is perfect, then $L(M^\calS)$ has a zero eigenvalue. We prove this by constructing $x \in \mathbb{C}^V \setminus \{0\}$ such that $x^{*} L(M^{\mathcal{S}}) x =0$. 

Let $E_1=\{(u,v)\in E: u \in S_i, v \in S_j, (i,j)\in E_\mathcal{S}\}$ and $E_2=\{(u,v)\in E: u \in S_i, v \in S_j, i=j \}$. Since we assume  $\calS$ is perfect, then we have $E=E_1 \cup E_2$.

Take $x$ such that $x_u = \omega_k^i$ for all $u \in S_i$, for $i = 1,\dots,k$. Then, using Lemma~\ref{lem:quadform}, we have that
\begin{align*}
    x^{*} L(M^\mathcal{S}) x
    &=
    \sum_{(u,v) \in E} w(u,v)\left|x_u - \frac{M^{\calS}_{u,v}}{w(u,v)} x_v\right|^2
    \\
    &=
    \sum_{(u,v) \in E_1} w(u,v)\left|x_u - \frac{M^{\calS}_{u,v}}{w(u,v)} x_v\right|^2
    + \sum_{(u,v) \in E_2} w(u,v)\left|x_u - \frac{M^{\calS}_{u,v}}{w(u,v)} x_v\right|^2
    \\
    &=
    \sum_{(u,v) \in E_1} w(u,v)\left| \omega_k^i - \omega_k^{i-j}  \omega_k^j\right|^2 
    + \sum_{(u,v) \in E_2} w(u,v)\left| \omega_k^i -  \omega_k^i\right|^2
    \\
    &=
    0.
\end{align*}
This proves the first statement.

Next, we prove that, under weak connectivity of the clusters and the meta-graph, $L(M^\calS)$ having a zero eigenvalue implies $\calS$ is perfect.

Suppose by contradiction that $\mathcal{S}$ is not perfect but $L(M^\calS)$ has a zero eigenvalue. Then, by Lemma~\ref{lem:quadform}, there exists $ x \in \mathbb{C}^V \setminus \{0\}$ such that, for all $(u,v)\in E$, it holds that
$\left|x_u - \frac{M^{\calS}_{u,v}}{w(u,v)} x_v\right| =0$. Moreover, since $\mathcal{S}$ is not perfect, 
two situations can occur: either (1)  there exists $(u,v),(v',u') \in E$ such that $u,u'\in S_i$, $v,v'\in S_j$, and $(i,j) \in E^\calS$; or (2) there exists $(u,v),(v',u') \in E$ such that $u,u'\in S_i$, $v,v'\in S_j$, and $(i,j),(j,i) \not\in E^S$, with $i \ne j$ (we remark this can only happen when exactly half of the edge weight between $S_i$ and $S_j$ is oriented from the former to the latter).

Assume (1) holds. Then, by the weak connectivity of the clusters, there must be a sequence of vertices $u = u_0, u_1, \dots, u_\ell = u'$ such that $u_r \in S_i$ and either $(u_r,u_{r+1}) \in E$ or  $(u_{r+1},u_r) \in E$ for any $r=0,\dots,\ell-1$. By the definition of $M^{\calS}$, we have that $\left|x_{u_r} -  x_{u_{r+1}}\right| =0$ for any $r=0,\dots,\ell-1$, which implies $x_u = x_{u'}$. Similarly, $x_v = x_{v'}$. However, since $(i,j) \in E^\calS$, we have that 
\begin{equation*}
    \left|x_{u} - \frac{M^{\calS}_{u,v}}{w(u,v)} x_{v}\right|
    =
    \left|x_{u} - \omega_k^{i-j}  x_{v}\right| = 0,
\end{equation*}
which implies $x_u = \omega_k^{i-j}  x_{v}$. Similarly, it must hold that
\begin{equation*}
    \left|x_{v'} - \frac{M^{\calS}_{v',u'}}{w(v',u')} x_{u'}\right|
    =
    \left|x_{v'} - \omega_k^{j-i} \mathrm{e}^{\pi \mi / 3} x_{u'}\right| = 0,
\end{equation*}
and therefore $x_{v'} = \omega_k^{j-i} \mathrm{e}^{\pi \mi / 3} x_{u'}$.
But since $x_u = x_{u'}$ and $x_v=x_{v'}$, we reach a contradiction.


We now assume (2) holds and use similar arguments as above. The key difference in this case is that as $(i,j),(j,i) \notin E^\mathcal{S}$, then both $(u,v) \in E$ and $(v',u') \in E$ incur a penalty. Under the weak connectivity of the clusters, we have $x_w = x_{w'}$ for all $w,w'$ in the same cluster. Furthermore, if $(a,b) \in E^\mathcal{S}$, then $x_w = \omega_k^{a-b}x_{w'}$ for any $w \in S_{a}$, $w' \in S_{b}$. Under the weak connectivity of the meta-graph, this can be iteratively applied to an underlying undirected path between $u$ and $v$ that contains only edges which do not incur a penalty (thus not including any edges directly between $S_i$ and $S_j$) to give $x_u = \omega_k^{i-j} x_v$. However, since $(i,j) \not\in E^S$ and $i \neq j$, then $M^{\calS}_{u,v} = w(u,v) \omega_k^{i-j}\mathrm{e}^{\pi \mi / 3}$, hence it must also hold that
\begin{equation*}
    \left|x_{u} - \frac{M^{\calS}_{u,v}}{w(u,v)} x_{v}\right|
    =
    \left|x_{u} - \omega_k^{i-j} \mathrm{e}^{\pi \mi / 3} x_{v}\right| = 0.
\end{equation*}
This also leads to a contradiction.
\end{proof}

The weak connectivity conditions in the lemma are there essentially to prevent the situation where disconnected components in either the clusters or the meta-graph allows a vector $x$ to be zero on a portion of the graph without incurring any penalty. 

An important feature of the matrix construction of (\ref{eq:Mdef}) is that we are able to encode \emph{any} possible meta-graph, not just cyclic  meta-graphs as in \cite{cucuringu2020hermitian,LaenenSun}. The problem with this construction is that it requires to know in advance a good clustering of the graph, which is exactly what we would like to compute. Notice, however, that if a clustering $\mathcal{S}$ on a given iteration is poor, in the sense that many edges do not respect the meta-graph $G^\mathcal{S}$ constructed using $\mathcal{S}$, then a vector $x$ encoding the clustering will result in a large quadratic form $x^{*}L(M^\mathcal{S})x$. Therefore, it is very likely there will be a different vector $y$ with smaller quadratic form $y^{*}L(M^\mathcal{S})y$. The hope is that we can use such a vector to obtain a new clustering where a larger proportion of edges respect the corresponding meta-graph. 

These considerations suggest an iterative algorithm for digraph clustering. Start with an arbitrary clustering $\mathcal{S}_0$. Construct the corresponding matrix representation $M^{\mathcal{S}_0}$. Use its eigenvectors to compute a new clustering $\calS_1$, and repeat for a number $T$ of iterations. At the end, we will have computed $T$ candidate clusterings: we can simply output the best one according to some measure $\delta$ that we will define later. The formal description of our algorithm is presented in Algorithm~\ref{alg:ours}.

\begin{algorithm}[!ht]
\caption{Iterative Spectral Clustering for Digraphs}
\label{alg:ours}

\begin{flushleft}
\textbf{Input:}
digraph $G=(V,E,w)$, number of clusters $k \in \mathbb{N}$, number of iterations $T \in \mathbb{N}$, initial clustering $\calS_0 = \{S_0, \dots, S_{k-1}\}$.

\textbf{Output:} 
clustering $\mathcal{S_*}$.

\textbf{Algorithm:}
\begin{algorithmic}[1]

\For{$t=1,\dots,T$}

\State Compute matrix representation $M^{\mathcal{S}_{t-1}}$.

\State Compute the normalised matrix representation $\sym{M}^{\mathcal{S}_{t-1}} = D^{-1/2} M^{\mathcal{S}_{t-1}} D^{-1/2}$.

\State Compute the orthonormal eigenvectors $f_{1}, \dots, f_{k}$ corresponding to the $k$ eigenvalues with largest absolute value of $\sym{M}^{\mathcal{S}_{t-1}}$.

\State Construct $F \in \mathbb{R}^{|V| \times k}$ whose columns are the eigenvectors $f_{1}, \dots, f_{k}$.

\State Compute $H=D^{-1/2}F$.

\State Use $k$-means clustering on the rows of $H$ to obtain a new clustering partition $\mathcal{S}_{t}$.

\EndFor

\State Return $\calS_{t^*}$ where $t^*= \argmin{0 \leq t \leq T} \; \delta(\mathcal{S}_t)$.

\end{algorithmic}
\end{flushleft}
\end{algorithm}

A few observations regarding Algorithm~\ref{alg:ours} are in order. Firstly, notice that instead of working with the matrix representation $M^{\mathcal{S}_{t-1}}$, we use $\sym{M}^{\mathcal{S}_{t-1}} = D^{-1/2} M^{\mathcal{S}_{t-1}} D^{-1/2}$. This normalisation is standard practice in spectral clustering algorithms \cite{von2007tutorial} and is needed to prevent vertices with very large degree dominating the top eigenvectors of the matrix. The normalisation $H = D^{-1/2}F$ from line $5$ in Algorithm~\ref{alg:ours} is also standard practice in spectral clustering algorithms and required because $F$ tends to depend on the square root of the degree of each vertex. Secondly, instead of using just the top eigenvector of $\sym{M}^{\mathcal{S}_{t-1}}$, we use the $k$ eigenvectors corresponding to the top $k$ eigenvalues in absolute value. We do this because, in practice, it has resulted in better experimental results. Finally, the initial clustering $\calS_0$ can be chosen arbitrarily; in our experiments we simply generate $\calS_0$ at random.

To finish formalising Algorithm~\ref{alg:ours}, we need to define a measure $\delta$ that quantifies how good a clustering is. We would like a measure that penalises a clustering whenever many edges in the graph are not oriented according to the meta-graph corresponding to the clustering partition. Moreover, we would like the amount of penalty assigned to each edge to depend on the volume of the cluster(s) incident to that edge: edges incident to smaller clusters should be penalised more. With these considerations in mind, we define the measure $\delta$ as follows.
\begin{definition}
    Let $\mathcal{S} = \{S_0,\dots,S_{k-1}\}$ be a clustering of $G=(V,E,w)$ with meta-graph $G^\calS = (V^\calS, E^\mathcal{S})$. Then, the clustering value of $\mathcal{S}$ is
    \[
        \delta(\mathcal{S}) = \sum_{(i,j) \in E^\mathcal{S}} \sum_{(u,v) \in E(S_j, S_i)} \frac{w(u,v)}{\min\{ \vol{S_i},\vol{S_j}\}}.
    \]
\end{definition}

This cost function is similar in spirit to the measure of \emph{Flow Imbalance} defined by Cucuringu et al. \cite{cucuringu2020hermitian} and can be considered a generalisation of the \emph{Flow Ratio} of Laenen and Sun \cite{LaenenSun} from cyclic cluster-structures to arbitrary meta-graphs. The scaling by minimum volume relates to the concept of conductance of a cut within a network~\cite{LOT}, which favours partitions of clusters of similar size.

Notice that $\delta(\mathcal{S})$ is between zero and $k$. In particular, it is equal to zero if and only if $\calS$ is a perfect clustering. Moreover, $\delta(\calS)$ is related to the quadratic form of $L(M^\mathcal{S})$. When $S_0,\dots,S_{k-1}$ all have the same volume, the smallest eigenvalue of $L(M^\mathcal{S})$ is, in a way, a convex relaxation for $\delta(\mathcal{S})$. Indeed, let $x \in \mathbb{C}^V$ such that $x_u = \omega_k^i$ if $u \in S_i$. Then, $x^* L(M^\mathcal{S}) x = \vol{S_0} \cdot \delta(\mathcal{S})$. 

One potential weakness of $\delta$ is that it does not penalise edges inside the clusters. This might bias towards clusters with a high inner-density and a small boundary. While we have not found evidence of this behaviour for our algorithm, it is still worth defining a modified clustering value that penalises edges inside clusters. Therefore, we define an alternative clustering value as follows
\[
        \delta_P(\mathcal{S}) = \sum_{(i,j) \not\in E^\mathcal{S}} \sum_{(u,v) \in E(S_i, S_j)} \frac{w(u,v)}{\min\{ \vol{S_i},\vol{S_j}\}},
\]
where the first sum includes also all $i=j$. 
We generally use this alternative clustering value in conjunction with the matrix representation ${M_P^{\mathcal{S}}}_{u,v}$ defined in \eqref{eq:Mdef_pen} that penalises edges inside clusters. 

Finally, in both variants, by scaling the contribution of each edge with respect to the minimum volume between the cluster(s) containing the endpoints of the edge, a greater penalty is applied to edges incident to small clusters. This is to penalise a clustering based on the proportion of edges between clusters that do not conform to the direction described by the meta-graph. If a clustering has very small clusters, however, penalising by proportion of edges implies most of the penalty will come from the smaller clusters, which might result in finding large clusters that are not particularly relevant. We therefore emphasise that the definition of clustering value can be tweaked based on the application to recover clusters that better align with the aim of the application.

\section{Experimental results}
In this section we present experimental results for our algorithm both on synthetic and real-world data sets. We will compare our results with the algorithm by Cucuringu et al.~\cite{cucuringu2020hermitian} (more precisely their random-walk normalisation), which we call the CLSZ algorithm for brevity. We call CLSZ the \textit{state-of-the-art}, as it has the most similar objective to our own while yielding an improvement over numerous alternative digraph clustering algorithms. In our experiments, we choose a number of iterations $T$ such that results are easy to visualise, while further iterations do not yield a significant improvement.

\subsection{The Directed Stochastic Block Model}
\label{sec:dsbm}
The Directed Stochastic Block Model (DSBM) is a random model of digraphs with a ground truth clustering. It is a natural directed generalisation of the traditional Stochastic Block Model. We now explain how digraphs from the DSBM are sampled.

We assume the ground truth clustering contains $k$ clusters $\mathcal{T} = \{S_0,\dots, S_{k-1}\}$ each of size $n$, for a total of $N=n \cdot k$ vertices in the digraph. We sample a digraph $G=(V,E)$ in the following way. First, we generate a meta-graph $G^\mathcal{T} = (\{0,\dots,k-1\}, E^\mathcal{T})$ associated with this ground-truth clustering. We do this by sampling, for any $i,j \in \{0,\dots,k-1\}$, the undirected edge $\{i,j\}$ with probability $\gamma \in [0,1]$. Then, for any sampled undirected edge $\{i,j\}$, we add with probability $1/2$ the directed edge $(i,j)$ to the meta-graph $G^\mathcal{T}$, otherwise we add $(j,i)$. The parameter $\gamma$ can be used to control the \emph{sparsity} of our meta-graph $G^\mathcal{T}$. We now sample edges in the graph $G$ according to the sampled meta-graph $G^\mathcal{T}$. First we start sampling edges inside clusters: for any $i \in \{0,\dots, k-1\}$ and any $u,v \in S_i$, with probability $p \in [0,1]$, we add an edge between $u$ and $v$, with orientation chosen uniformly at random. Then, we sample edges between clusters: for any $(i,j) \in E^\mathcal{T}$, and any $u \in S_i, v \in S_j$, we add an edge between the two with probability $p$; the edge will be oriented from $u$ to $v$ with probability $\eta \in [1/2,1]$, from $v$ to $u$ with probability $1-\eta$. Essentially $\gamma$ represents the sparsity of the meta-graph $G^\mathcal{T}$, $p$ the sparsity of the digraph $G$, and $\eta$ how dominant are the edge directions between clusters (alternatively, $1-\eta$ can be thought of as the noise level).

Variants of this model can be created, for example, by allowing edges between clusters that are not connected in the meta-graph, or by setting different inter- and intra-cluster densities. We decided, however, to define a model with fewer parameters that would still capture the varying complexity of our clustering task. We remark a very similar model has been previously used by \cite{coclustering, cucuringu2020hermitian}.


To highlight the performance of our algorithm on the DSBM, we first look at a graph sampled from the DSBM with parameters $n = 100$, $k = 5$, $\gamma = 0.4$, $p = 0.5$ and $\eta = 0.6$. Since digraphs sampled from the DSBM possess a ground-truth clustering, we can measure the quality of a clustering using the misclassification error, i.e., the proportion of vertices classified incorrectly. This is done by applying the Hungarian algorithm to find the permutation of the clusters that minimises the symmetric difference between our clustering and the ground truth.

\begin{figure}[!ht]
    \centering
    \includegraphics[width=0.35\textwidth]{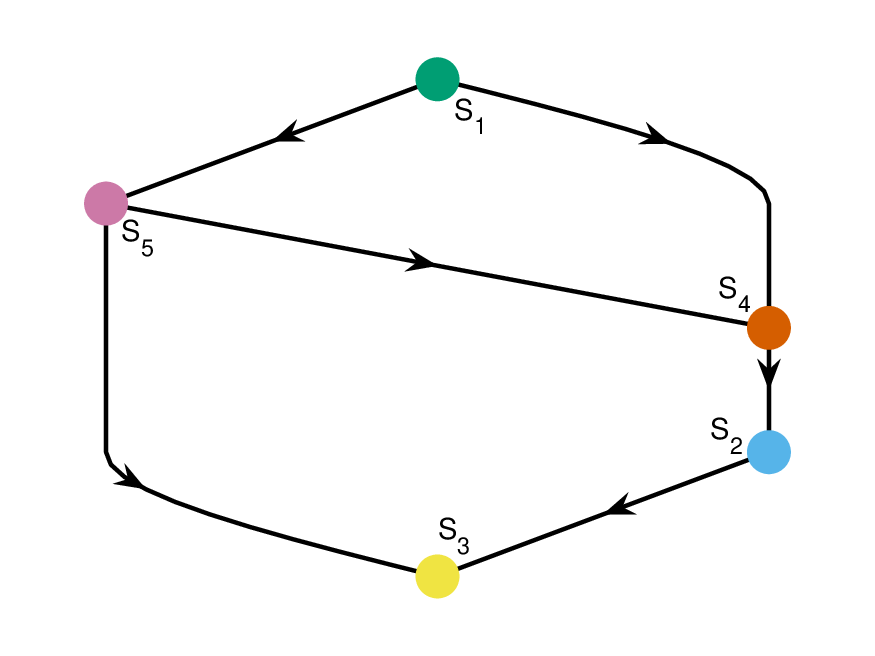}
    \includegraphics[width=0.35\textwidth]{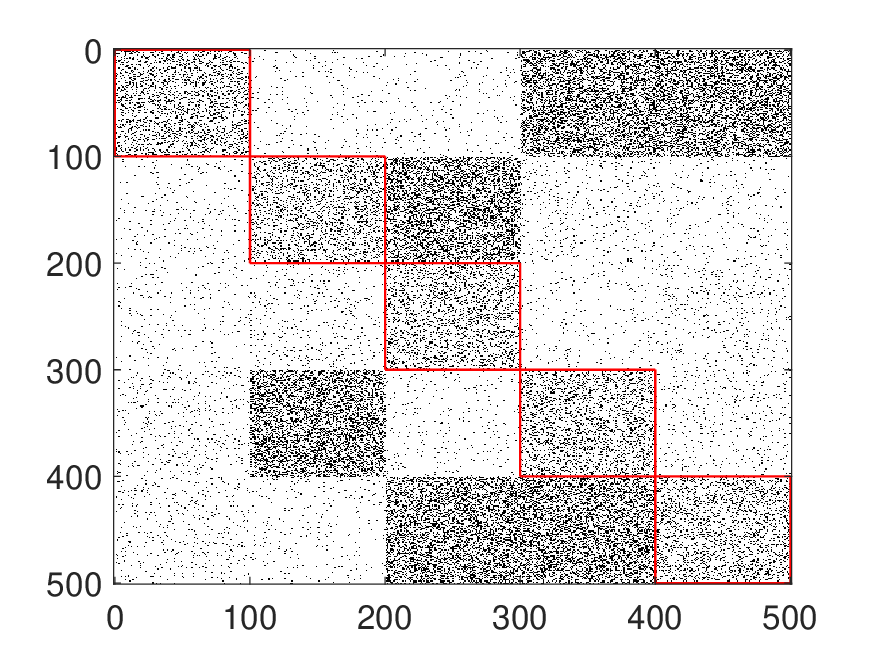}
    \caption{DSBM with parameters $n = 100$, $k = 5$, $\gamma = 0.4$, $p = 0.5$ and $\eta = 0.6$. Meta-graph used to sample the graph (left). Heat map of the adjacency matrix, reordered with respect to the recovered clustering shown in red (right).}
    \label{Fig:DSBM_traj}
\end{figure}

\begin{figure}[!ht]
    \centering
    \includegraphics[width=0.7\textwidth]{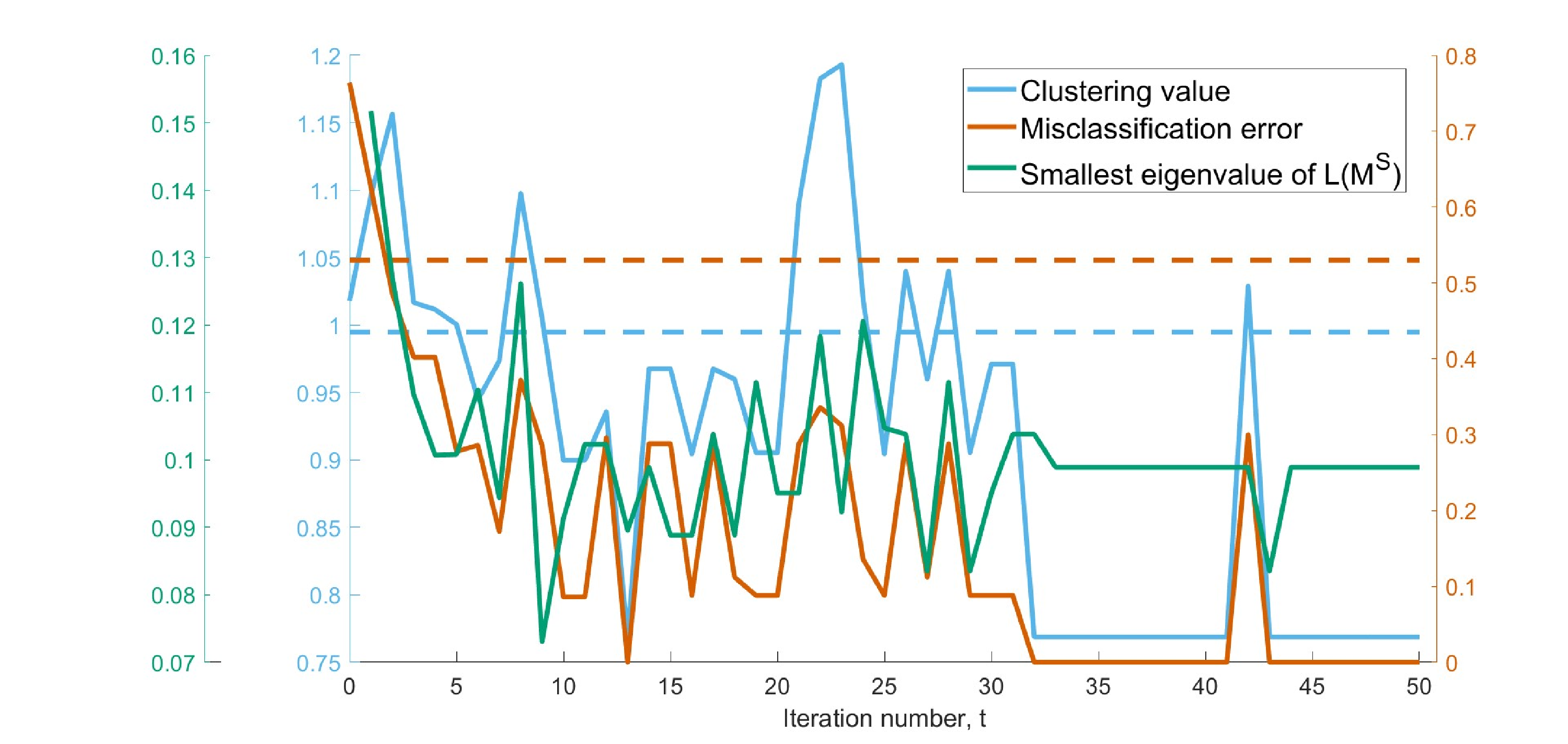}
    \caption{DSBM with parameters $n = 100$, $k = 5$, $\gamma = 0.4$, $p = 0.5$ and $\eta = 0.6$. Comparison of the performance of our algorithm (solid lines) vs CLSZ (dashed lines). 
    \newline
    Our minimum clustering value $\delta$: 0.769. CLSZ  clustering value $\delta$: 0.995.
    \newline
    Our minimum misclassification error: 0.000. CLSZ misclassification error: 0.530.}
    \label{fig:DSBM_lambda}
\end{figure}

Fig.~\ref{Fig:DSBM_traj} (left) shows the meta-graph used to generate the digraph. 
Fig.~\ref{Fig:DSBM_traj} (right) shows the 
heat map of the adjacency matrix of the digraph $G$ generated from the DSBM.
In Fig.~\ref{fig:DSBM_lambda} we show the behaviour of our algorithm on the same digraph, together with a comparison with the CLSZ algorithm. First, we highlight that the misclassification error (solid red), i.e., the proportion of vertices misclassified, generally decreases with time and indeed we are able to recover a clustering of zero error. On the other hand, the CLSZ algorithm performs almost as bad as the random clustering our algorithm started with, both from the point of view of the misclassification error (dashed red) and the clustering value $\delta$ (dashed blue). We also emphasise how the clustering value $\delta(\calS_t)$ tracks fairly well the misclassification error of $\calS_t$. This is true also for the smallest eigenvalue of $L(M^{\calS_t})$, even though not as well. We remark that the clustering value $\delta(\calS_t)$, however, does not reach zero even when a clustering with zero misclassification error has been found. This is due to the fact that, as is shown in  
Fig.~\ref{Fig:DSBM_traj} (right), the digraph is characterised by a nontrivial amount of noise. Finally, we also observe that unfortunately none of these three measures are monotonically decreasing with $t$, raising the more complex question of which iteration the algorithm should terminate at. We believe this behaviour is due to the fact that not always the top $k$ eigenvectors of $\sym{M}^{\calS_t}$ are all significant, but rather some of them might be mainly noise. Injecting this noise, however, prevents the algorithm to get stuck in a local minima.



\begin{figure}[!ht]
    \centering
    \includegraphics[width=0.4\textwidth]{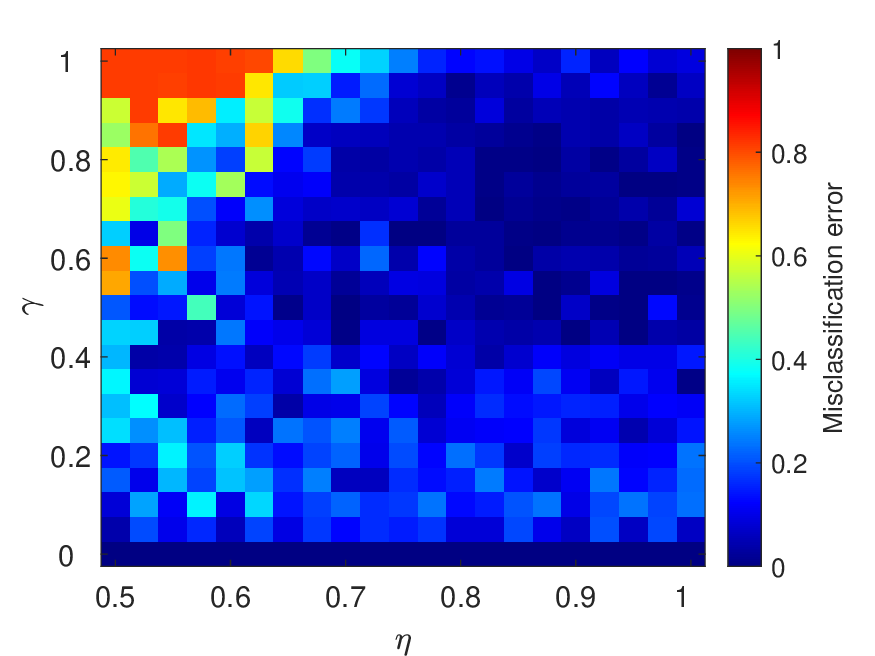}
    \label{Fig:DSBM_grid_luca}
    \includegraphics[width=0.4\textwidth]{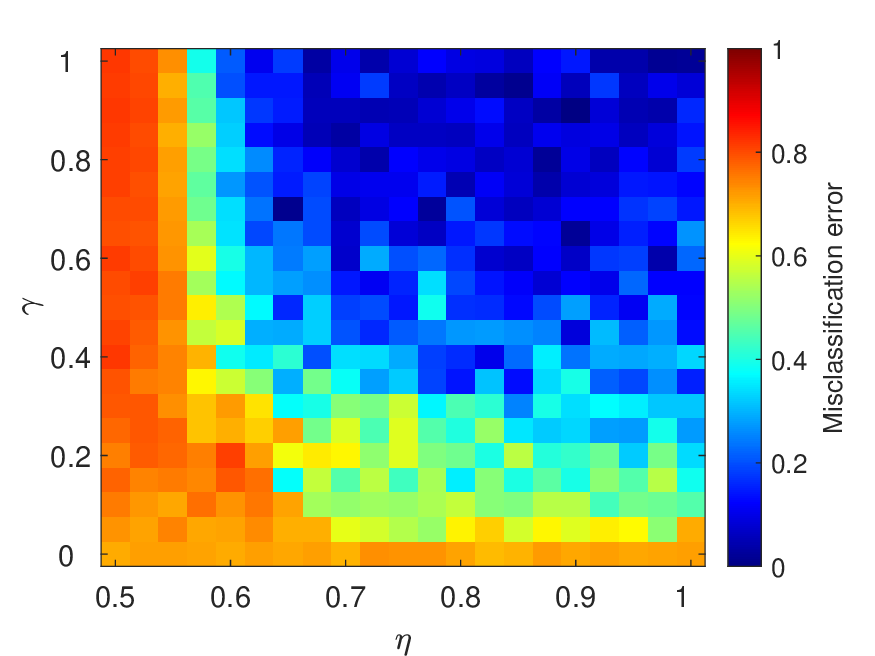}
    \caption{A comparison between our algorithm (left), with $T=50$, and CLSZ (right) on DSBM with fixed parameters $n=100$, $k=7$, $p=0.8$ and varying  $\eta$ and $\gamma$. Colour represents the average error over $10$ independent simulations. }
    \label{Fig:DSBM_grid_our}
\end{figure}

Fig.~\ref{Fig:DSBM_grid_our} presents a more extensive comparison on the performances of our algorithm and CLSZ. In particular, we compare their performances for different values of $\gamma$, the sparsity of the meta-graph, and $\eta$, the noise level in the direction of the edges. It is clear our algorithm outperforms CLSZ in the majority of the parameters' range, especially when the meta-graph is sparse (small values of $\gamma$). In particular, notice that when $\gamma = 0$, the graph generated will have $k$ disconnect components: in this case our algorithm trivially recovers the ground-truth clustering, while CLSZ performs poorly.

\subsection{Real-world data sets}
\label{sec:real_exp}
We present experimental results on real-world data sets related to the digital card game Hearthstone, a food web representing species in Florida Bay, and a biological neural network of C. elegans. Numerical comparisons between our algorithm and the CLSZ algorithm on real-world data sets are presented in Table~\ref{tab:RW_comparisons}, showing that our algorithm yields a noticeable improvement.

\begin{table}[!ht]
    \centering
    \caption{Comparison between our algorithm and CLSZ algorithm on a range of real-world data sets when finding $k$ clusters. Results measured using $\delta$ use the matrix representation defined in $M^{\mathcal{S}}_{u,v}$, while results measured using $\delta_P$ use the matrix representation ${M_P^{\mathcal{S}}}_{u,v}$ where intra-cluster edges are penalised}
    \begin{tabular}{ c c c c c c }
    \hline 
    Data set & $k$ & Measure & Our algorithm & CLSZ & \% decrease
    \\
    \hline 
    \multirow{2}{*}{Hearthstone} & \multirow{2}{*}{9} & $\delta$ & 2.030 & 2.253 & 9.9\%
    \\
    & & $\delta_P$ & 1.538 & 1.801 & 14.6\%
    \\
    \hline 
    \multirow{2}{*}{Florida Bay} & \multirow{2}{*}{5} & $\delta$ & 0.066 & 0.142 & 53.5\%
    \\
    & & $\delta_P$ & 0.358 & 0.503 & 28.8\%
    \\
    \hline 
    \multirow{2}{*}{C. elegans} & \multirow{2}{*}{5} & $\delta$ & 0.295 & 0.348 & 15.2\%
    \\
    & & $\delta_P$ & 0.634 & 0.782 & 18.9\%
    \\ \hline
    \end{tabular}
    \label{tab:RW_comparisons}
\end{table}
    

\paragraph{Hearthstone}
In the popular online deck-building card game Hearthstone, two players tactically play cards on their turn with the aim of defeating their opponent. Prior to a match, each player chooses one of 9 possible heroes and designs a deck of 30 cards. Each hero has a unique ``power'', and decks associated with a hero can contain certain cards exclusive to that hero. Different heroes typically require different strategies that work better with their unique cards and power.

A natural condition for a card game to keep remaining engaging over time is to ensure there is no clear best hero or strategy. Moreover, it would be interesting to determine if Hearthstone follows a dynamic similar to ``rock, paper, scissors'', where heroes are strong in some match-ups and weak in others. This dynamic would be reflected in an underlying cyclic meta-graph, demonstrating the relevance of our clustering task to analyse games. 

The data set used in our experiments is from the AAIA'18 Data Mining Challenge \citep{AAIA_2018} which consists of $N=400$ randomly generated decks. Within the data set, $299,680$ matches were played between AIs, who selected their deck uniformly at random at the beginning of each game.

We construct a weighted adjacency matrix $W \in \mathbb{Z}_{\geq 0}^{N \times N}$ to represent this data set as a digraph, where vertices correspond to decks and edges corresponded to the win-rate between them. Let $\widetilde{W}_{u,v}$ be the number of games in which deck $u$ won against deck $v$. The matrix $W$, with $W_{u,v}=(\widetilde{W}_{u,v}-\widetilde{W}_{v,u})/(\widetilde{W}_{u,v}+\widetilde{W}_{v,u})$ then only gives information on the proportion of games won, without information on the number of games played.

\begin{figure}[!ht]
\centering
    \includegraphics[width=0.7\textwidth]{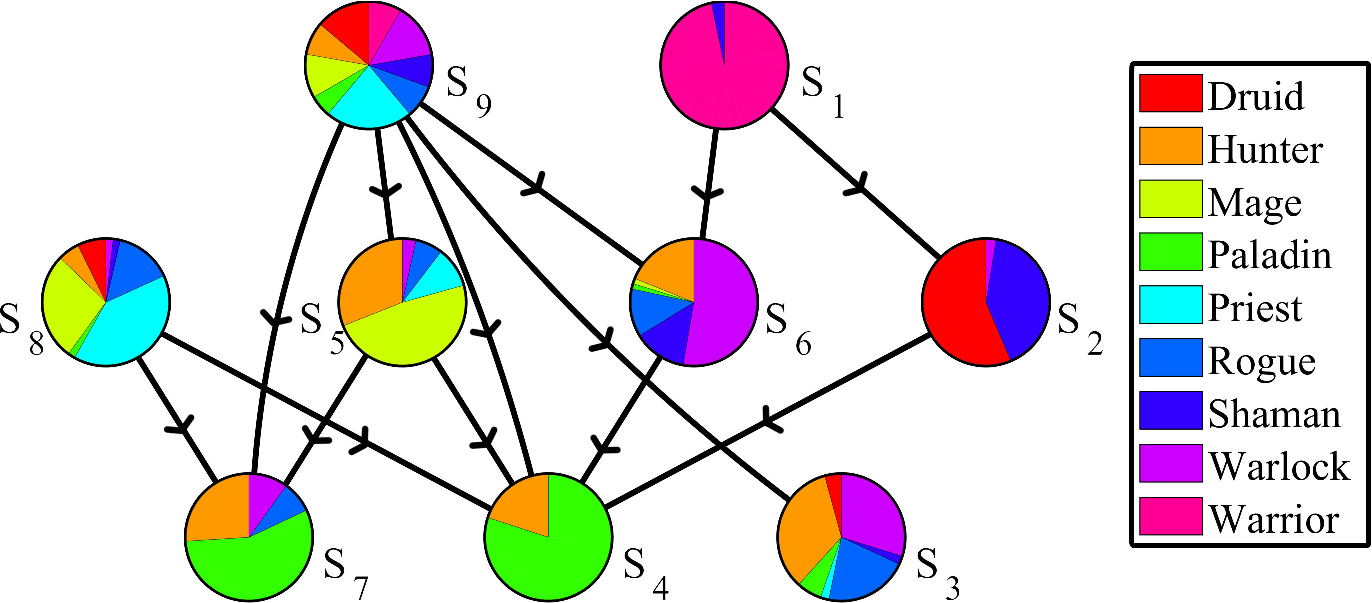}  
    \caption{Hearthstone data set. Clustering partition recovered using our algorithm ($k=9$, $T=100$). Pie charts correspond to clusters  recovered by our algorithm and display the distribution of heroes within. An arrow from cluster $S_i$ to $S_j$ is shown if the edge weight directed from $S_i$ to $S_j$ is more than $80 \%$ of the total edge weight between $S_i$ and $S_j$.}
    \label{Fig:HS_pie}
\end{figure}

Fig.~\ref{Fig:HS_pie} shows the meta-graph computed by our algorithm for $k=9$ (we have chosen a number of clusters equal to the number of heroes). Each pie chart corresponds to a different cluster, with different colours describing the proportion of each hero in the decks belonging to each cluster.  

From the distribution of the heroes in the clusters, we can see that several clusters have one or two dominant heroes, meaning that the clusters found have a non-trivial correlation with the different heroes, i.e., win-rates between  decks are somewhat correlated with their heroes.
In particular, the cluster $S_1$ contains almost all of the Warrior decks.

Furthermore, the meta-graph is acyclic and has a hierarchical structure. The figure suggests that the average Warrior deck does not typically have a disadvantage against any other deck and one could argue that Warrior is the ``best'' hero. On the other hand, an average Paladin deck does not seem to have an advantage against any other deck. However, while the average Paladin deck might perform poorly, the cluster $S_9$, at the top of the hierarchical structure, contains some Paladin decks, as well as decks from all other heroes. This suggests all heroes have potential to succeed if the deck was chosen carefully. This is crucial for the game to be engaging and well-designed. We remark, though, that the games in the data set were played between AIs, which could potentially impose a bias on the results. We also mention that edges connecting clusters at the top of the figure with clusters at the bottom are characterised by a very strong dominant direction: for example, 97.44\% of the edge weight between $S_9$ and $S_4$ is directed from the former to the latter. Conversely, edges between clusters sitting on the same hierarchical level do not have a dominant direction. For example, only 50.43\% of the edge weight between $S_2$ and $S_5$ follows the dominant direction (from $S_5$ to $S_2$).

\begin{figure}[!ht]
\centering
    \includegraphics[width=0.3\textwidth]{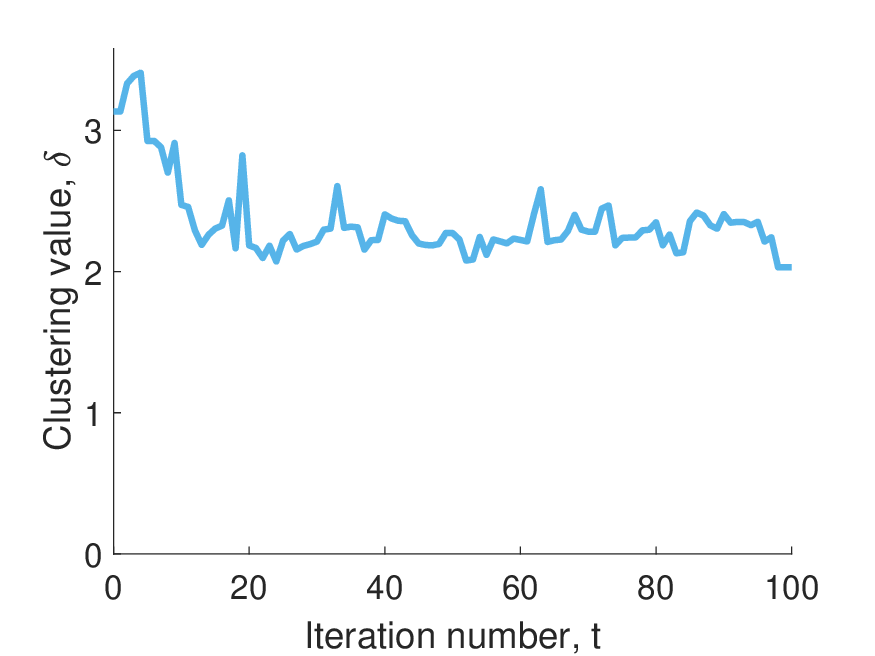}
    \includegraphics[width=0.3\textwidth]{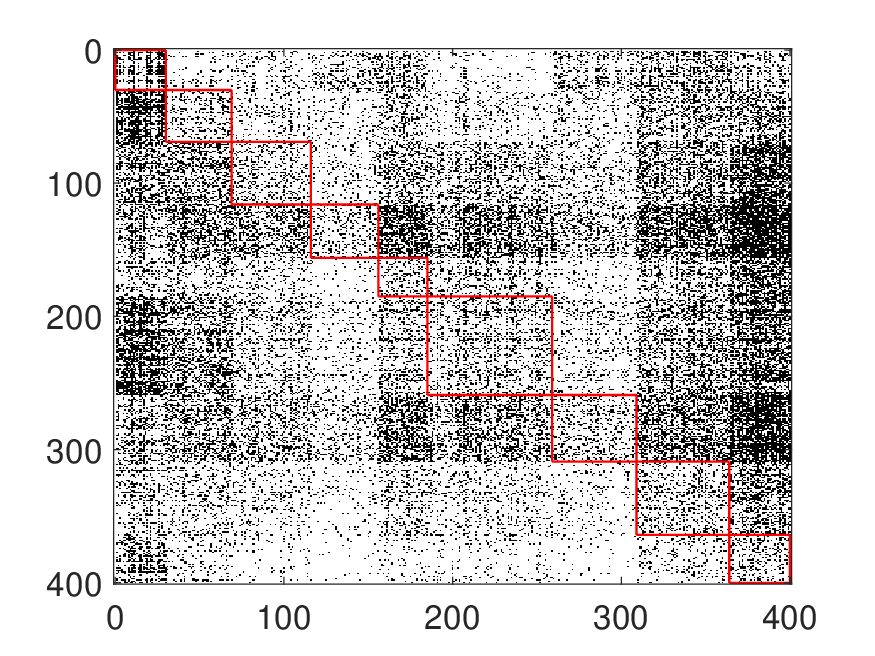}
    \includegraphics[width=0.3\textwidth]{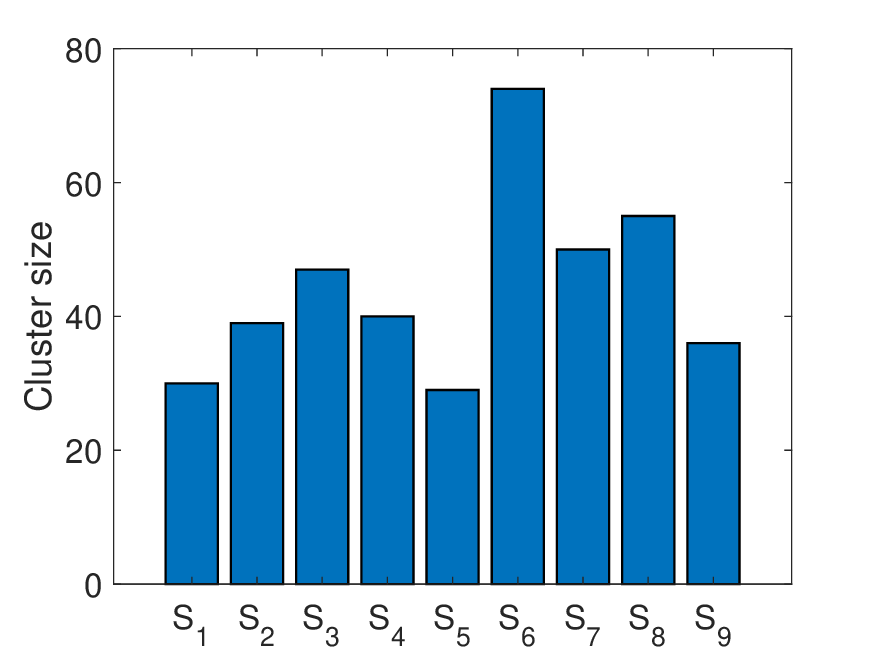}
    \caption{Hearthstone data set, using $k=9$ and $T=100$. Clustering value trajectory of our algorithm (left). Heat map of the adjacency matrix, reordered with respect to the recovered clustering shown in red (middle). Histogram of the sizes of clusters found by our algorithm (right).
    \newline
    Our minimum $\delta$ value: 2.030. CLSZ $\delta$ value: 2.253.}
    \label{Fig:HS}
\end{figure}

Fig.~\ref{Fig:HS} (left) shows that the clustering value of our algorithm improves upon the clustering recovered using the CLSZ algorithm, obtaining a $9.9\%$ reduction in the clustering value. 
Fig.~\ref{Fig:HS} (middle) shows the adjacency matrix after reordering with respect to the clustering partition obtained. It exhibits a clear block model structure that is not block-diagonal. This is in contrast to undirected clustering, where most edges tend to be inside clusters, resulting in a block-diagonal structure. Finally, Fig.~\ref{Fig:HS} (right) displays the histogram of the size of the clusters found by our algorithm.
Interestingly, the clusters all have reasonable size, such that no cluster contains most of the vertices in the graph.


\paragraph{Florida Bay}
Food webs are typically represented as a network with edges pointing in the direction of energy flow, i.e., an arrow from species $u$ to species $v$ suggests $u$ gets consumed by $v$. In this environment, one might typically expect to uncover a directed hierarchical cluster structure, with clusters strongly relating to trophic levels. Species in the same community might have very different diets, but share similar relations with other communities. Ideally, our algorithm will recover this underlying meta-graph.

The Florida Bay data set \citep{snapnets} consists of an unweighted digraph representing carbon exchange in Florida Bay between $N=125$ different species. Edges correspond to a species' typical diet observed within the bay. We run our algorithm for a number of clusters $k=5$. Since species sitting at the same level of the food chain do not typically predate each other, we use the matrix representation ${M_P^{\mathcal{S}}}_{u,v}$ in \eqref{eq:Mdef_pen} that penalises intra-cluster edges.
\begin{figure}[!ht]
\centering
    \includegraphics[width=0.3\textwidth]{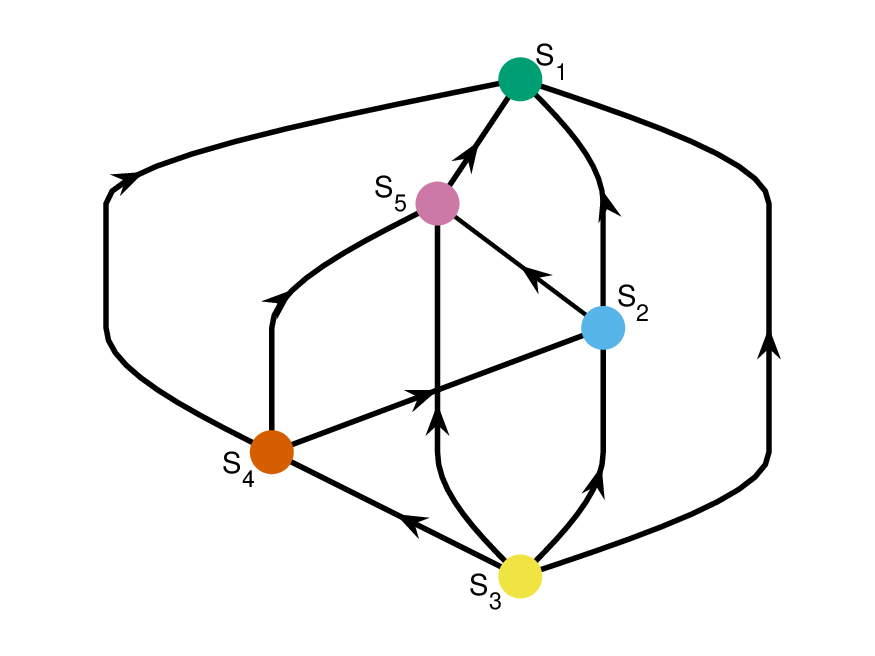}
    \includegraphics[width=0.3\textwidth]{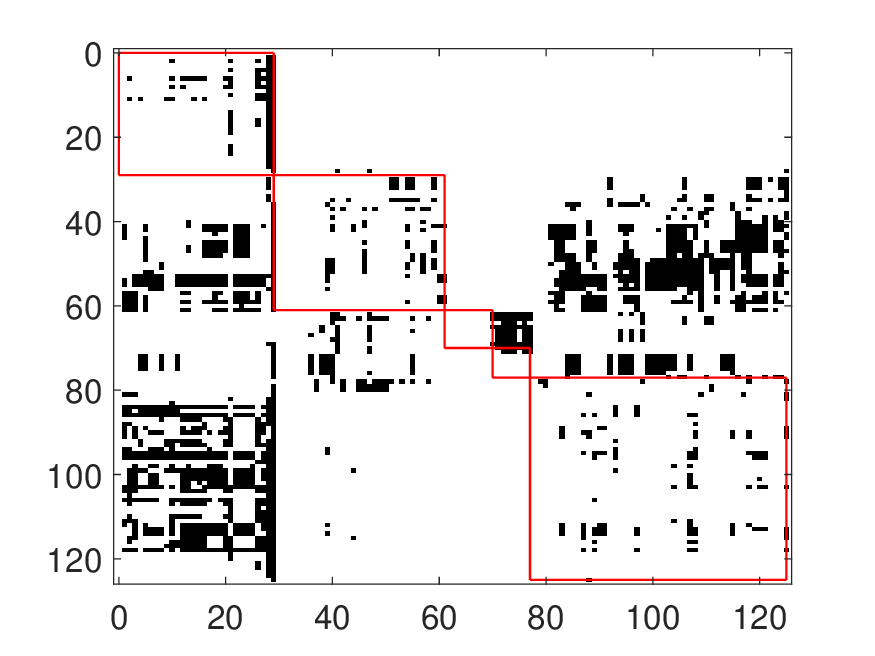}
    \includegraphics[width=0.3\textwidth]{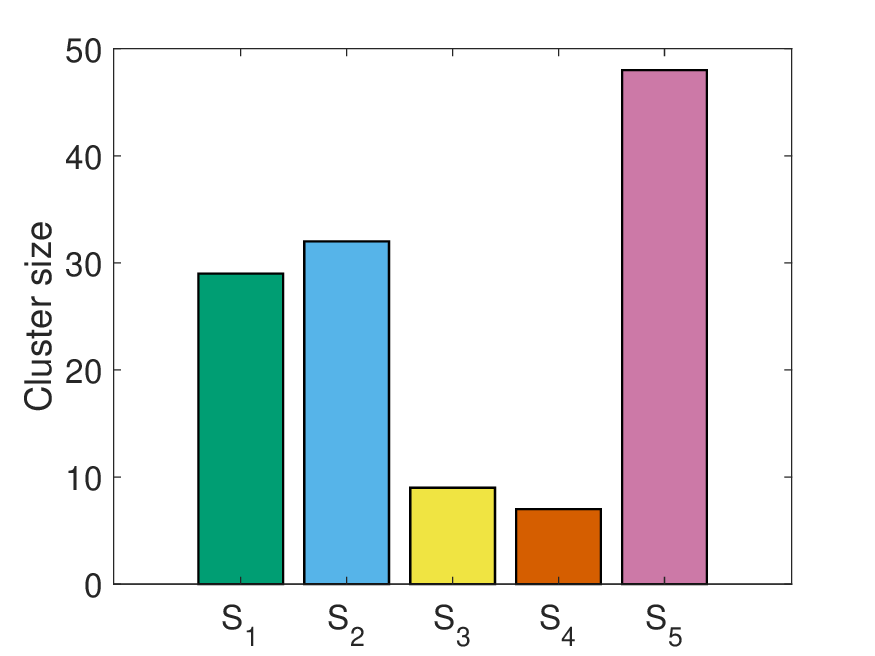}
    \caption{Florida Bay data set using $k=5$ and $T=100$. Meta-graph recovered using our algorithm (left), where an arrow from cluster $S_i$ to $S_j$ implies the edge weight directed from $S_i$ to $S_j$ is more than $80 \%$ of the total edge weight between $S_i$ and $S_j$. Heat map of the adjacency matrix, reordered with respect to the recovered clustering shown in red (middle). Histogram of the sizes of clusters found by our algorithm (right).
    \newline
    Our minimum $\delta_P$ value: 0.358. CLSZ minimum $\delta_P$ value: 0.503.
}
    \label{Fig:Florida}
\end{figure}
Our algorithm considerably improves the clustering value $\delta_P$ with respect to CLSZ ($0.3578$ vs $0.5032$).
Fig.~\ref{Fig:Florida} (left) shows that our algorithm successfully recovered an underlying hierarchical meta-graph. In particular, edges between each pair of clusters are characterised by a very strong dominant direction: for any pair of clusters, at least 94\% percent of the edges between the two clusters are aligned with the corresponding edge in the meta-graph. Seven out of ten pairs of clusters have all cross-edges perfectly following the corresponding edge direction in the meta-graph. The quality of the clustering found by our algorithm is also evident in the heat map of the adjacency matrix, reordered according to our clustering. We can see a clear block structure, characterised by dense regions outside the diagonal blocks.

In addition, we inspected the different species occupying each of the clusters found by our algorithm and we observed the red cluster $S_1$ at the top of the food web in Fig.~\ref{Fig:Florida} (left) contains typical predators, such as crocodiles, birds, and sharks. Furthermore, the cluster $S_3$ at the bottom of the food web contains typical producers such as bacteria and plankton. This is what we would expect a good clustering of the food web to look like. Finally, Fig.~\ref{Fig:Florida} (right) shows that the clusters found by our algorithm also have reasonable sizes, as no cluster contains most of the vertices in the graph.

While perhaps not relevant within the context of a food web, if we do not penalise edges inside the clusters, the best clustering found by our algorithm achieves a clustering value $\delta$  of $0.063$, more than halving the CLSZ clustering value of $0.142$, shown in Table~\ref{tab:RW_comparisons}. 


\paragraph{C. elegans brain network} The C. elegans frontal neural network~\cite{c-elegans} describes the connections between neurons in the brain of C. elegans, a species of worm. It is represented by a  digraph of $130$ vertices and $610$ edges. We have set $k=5$ and applied both versions of our algorithm: the one that penalises edges inside clusters and the one that does not. 
\begin{figure}[!ht]
    \centering
    \includegraphics[width=0.3\textwidth]{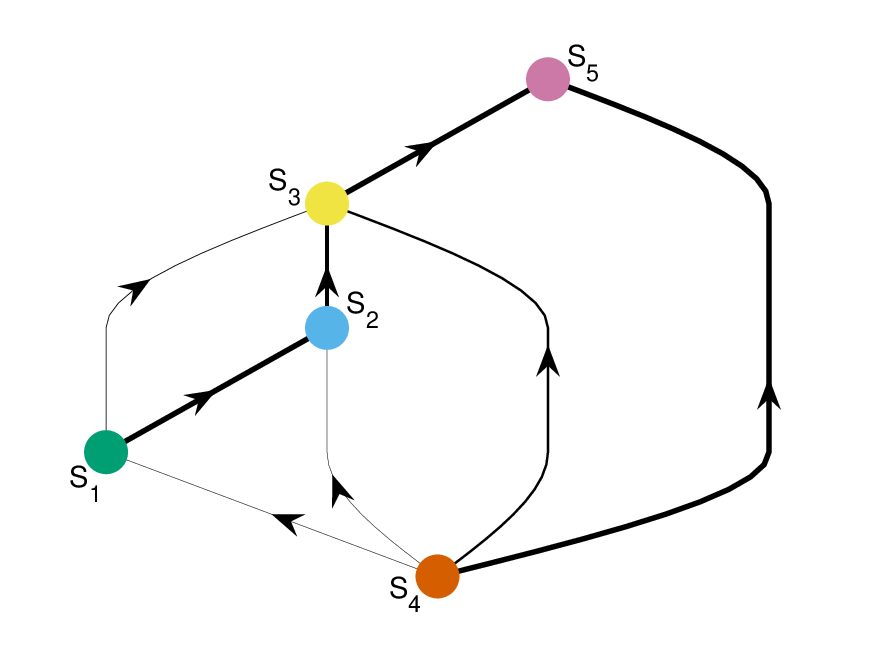}
    \includegraphics[width=0.3\textwidth]{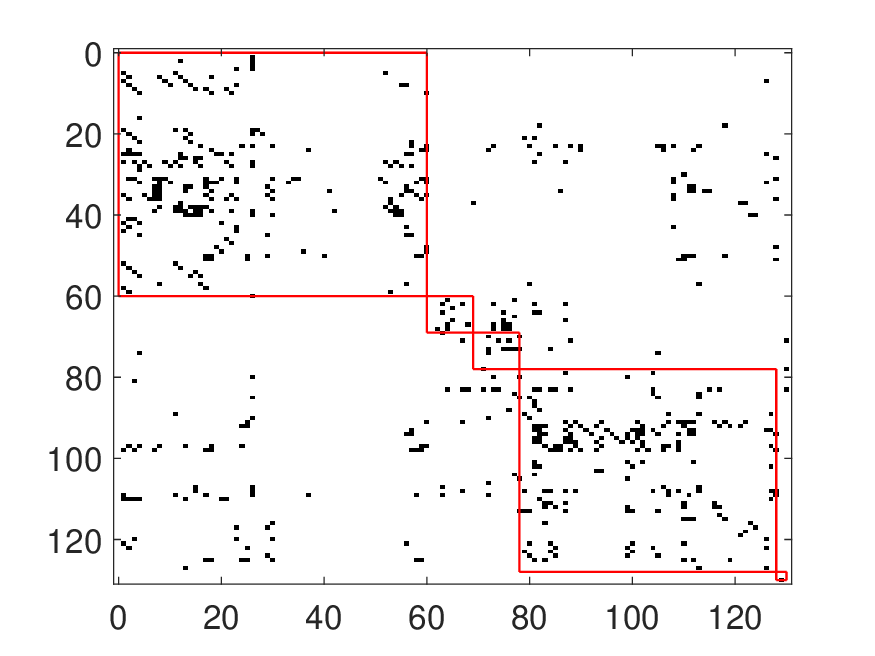}
    \includegraphics[width=0.3\textwidth]{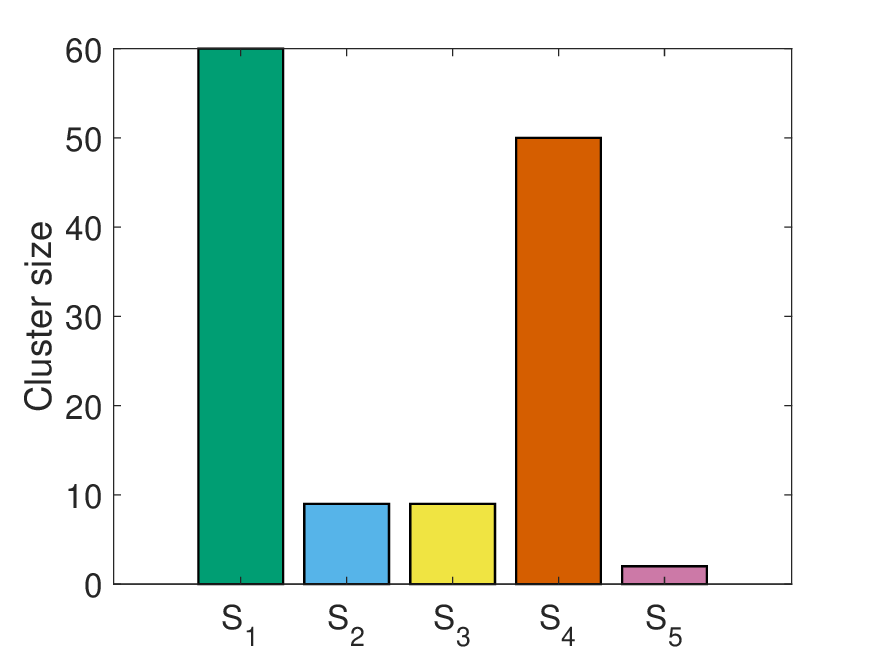}
    \\
    \includegraphics[width=0.3\textwidth]{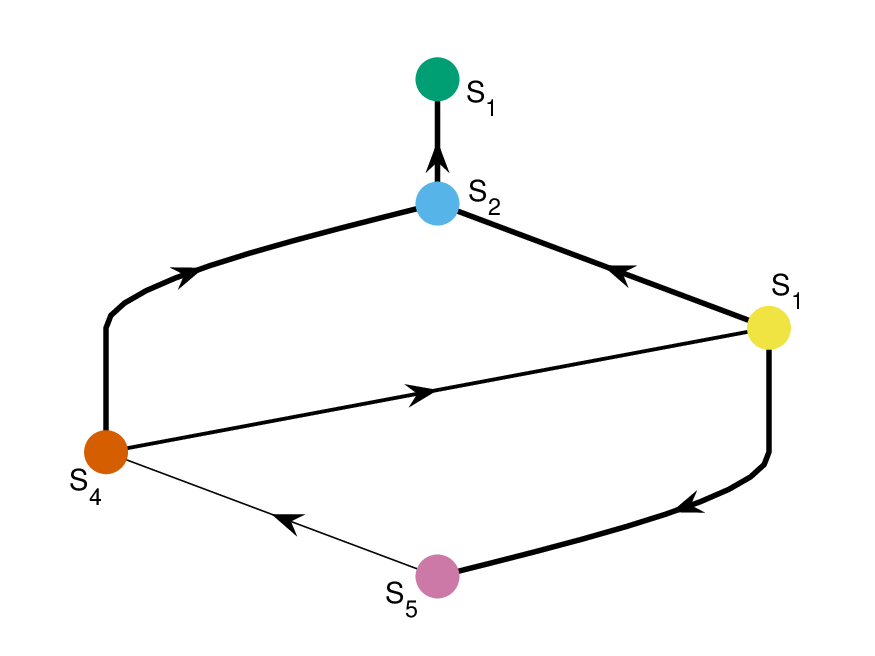}
    \includegraphics[width=0.3\textwidth]{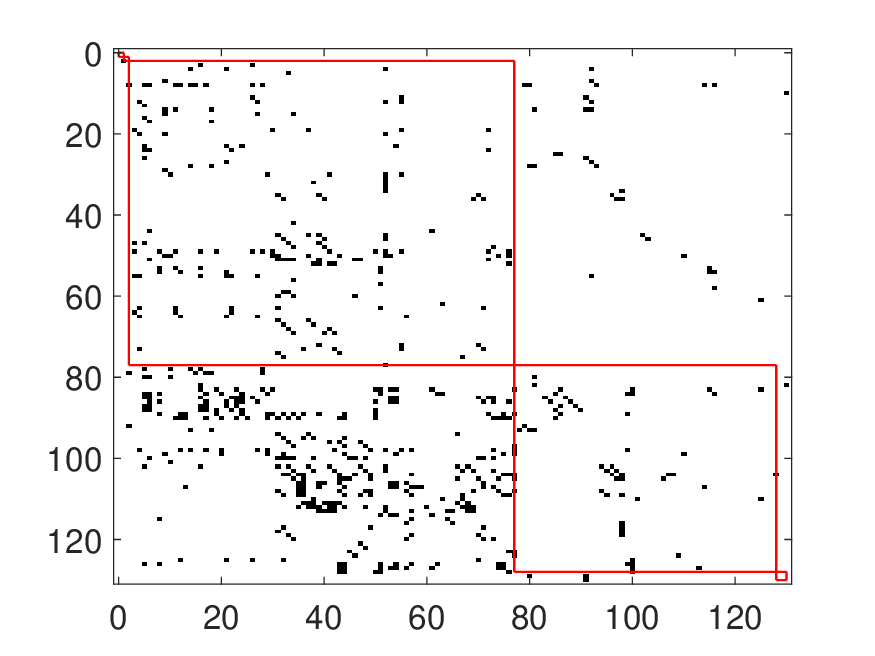}
    \includegraphics[width=0.3\textwidth]{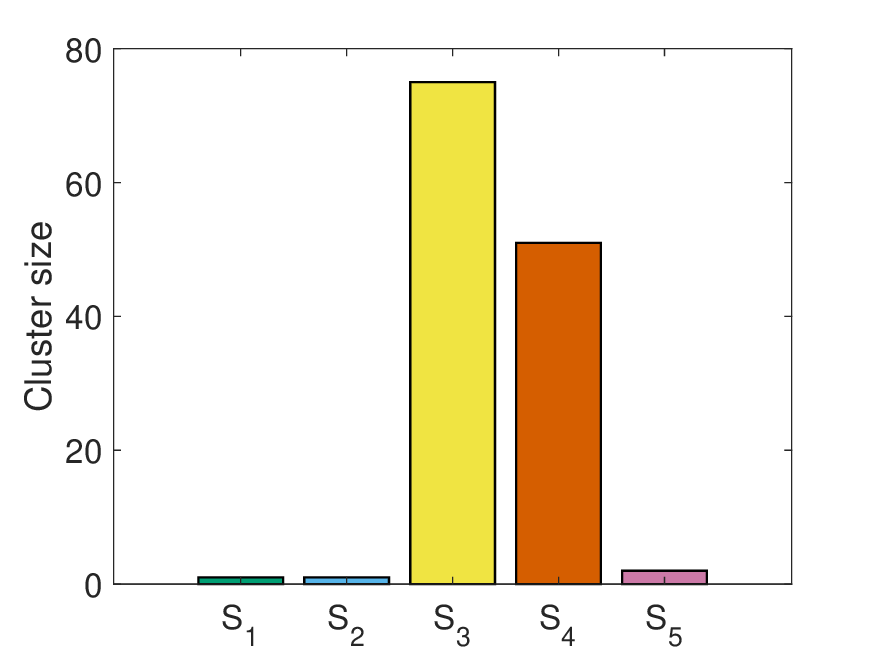}
    \caption{C. elegans data set using $k=5$ and $T=50$. Results from our algorithm without penalties on intra-cluster edges (top) and when penalties on intra-cluster edges are applied (bottom). Meta-graph recovered using our algorithm (left), where an arrow from cluster $S_i$ to $S_j$ implies the edge weight directed from $S_i$ to $S_j$ is more than $50 \%$ of the total edge weight between $S_i$ and $S_j$. Arrow width indicates the proportion of the edge weight oriented in this direction. Heat map of the adjacency matrix, reordered with respect to the recovered clustering shown in red (middle). Histogram of the sizes of clusters found by our algorithm (right).
    \newline
    Our $\delta$ minimum value: 0.295. CLSZ $\delta$ value: 0.348. 
    \newline
    Our $\delta_P$ minimum value: 0.634. CLSZ $\delta_P$ value: 0.782.}
    \label{Fig:celegans}
\end{figure}
The results in Fig.~\ref{Fig:celegans} show that our algorithm achieves a nontrivial improvement over CLSZ on both clustering values $\delta$ and $\delta_p$. The difference in the clusterings found by the two variants of our algorithm is noticeable in the corresponding meta-graphs, as shown in Fig.~\ref{Fig:celegans} (left). The direction of the edges between the two large clusters found without penalising intra-cluster edges is almost random: only 55.65\% of the edges are oriented from $S_4$ to $S_1$. On the contrary, when we penalise intra-cluster edges, 91.16\% of the edges between the two large clusters are oriented from $S_4$ to $S_3$.

This data set can be used 
to showcase the differences in the two variants of our algorithm. As shown by Fig.~\ref{Fig:celegans} (right), in both cases, we find two large clusters and three very small clusters. Fig.~\ref{Fig:celegans} (middle), shows that when we do not penalise edges inside clusters, the reordered adjacency matrix exhibits an almost diagonal block structure. This is consistent with having most of the edges inside the two large clusters. When we do penalise edges inside clusters, instead, the reordered adjacency matrix exhibits a block structure that is far from diagonal, as evidenced by the relatively dense bottom left block. We believe this ability to use different matrix representations to obtain different types of clusters is one of the interesting features of our algorithm.

\section{Conclusion}
We have developed an algorithm for digraph clustering and shown that it outperforms the state-of-the-art on both synthetic and real-world data sets. Our algorithm has the drawback that it requires multiple applications of spectral clustering and is therefore more computationally expensive than algorithms that perform a single spectral clustering application (our algorithm is approximately $T$ times more computationally expensive than CLSZ, where $T$ is the total number of iterations). We believe, however, that this additional computational cost is justified by the improved performances of our algorithm.

We have shown how our algorithm is able to uncover meta-graphs that describe the structure of clusters in a digraph, without imposing any obvious bias on the meta-graphs found. We have also shown how to tweak our algorithm to be able to penalise (or not) clusters with many edges inside. We believe the ability to further tailor our algorithm to specific practical scenarios is one of the most promising directions for future research. In particular, in certain applications we might want to favour specific types of meta-graphs (for example, acyclic meta-graphs). This could be done by, at each iteration, only fitting the respective clustering found to the ``best'' meta-graph among a subset of suitable meta-graphs that satisfy certain conditions specified by the user. This would in turn determine a different matrix representation and orient our algorithm towards finding specific cluster-structure. We remark that this bias would be chosen by the user and not be intrinsic to the algorithm, as instead happens in \cite{cucuringu2020hermitian,LaenenSun}. We leave this direction as future work.

Finally, we mention that finding the right formulation of our clustering task as an optimisation problem (similar to how spectral clustering for undirected graphs can be formulated~\cite{ng2001spectral,ShiMalik}) remains an open question.

\bibliographystyle{comnet}





\end{document}